\newtheorem{theorem}{Theorem}
\newtheorem{lemma}{Lemma}
\newtheorem{corollary}{Corollary}
\newcommand{\metatree}{Meta Tree\xspace}
\newcommand{\candidateblock}{Candidate Block\xspace}
\newcommand{\candidateblocks}{Candidate Blocks\xspace}
\newcommand{\metagraph}{Meta Graph\xspace}
\newcommand{\bridgeblock}{Bridge Block\xspace}
\newcommand{\bridgeblocks}{Bridge Blocks\xspace}
\newcommand{\Cinc}{\mathcal{C}_{inc}}
\newcommand{\CI}{\mathcal{C}_\mathcal{I}}
\newcommand{\CU}{\mathcal{C}_\mathcal{U}}
\newcommand{\profit}{\textnormal{profit}}
\newcommand{\ap}{{v_{\aps}}}
\newcommand{\aps}{a}
\title{Efficient~Best Response~Computation for Strategic Network Formation under Attack}
\author{Tobias Friedrich\thanks{Algorithm Engineering Group, Hasso Plattner Institute Potsdam, Germany}  
\and Sven Ihde\footnotemark[1] 
\and Christoph Keßler\footnotemark[1] 
\and Pascal Lenzner\footnotemark[1] \thanks{\texttt{pascal.lenzner@hpi.de}}
\and Stefan Neubert\footnotemark[1] 
\and David Schumann\footnotemark[1]
}
\date{~}
\begin{document}

\maketitle

\begin{abstract} 
\noindent Inspired by real world examples, e.g. the Internet, research\-ers have introduced an abundance of strategic games to study natural phenomena in networks. Unfortunately, almost all of these games have the conceptual drawback of being computationally intractable, i.e. computing a best response strategy or checking if an equilibrium is reached is \NP-hard. Thus, a main challenge in the field is to find tractable realistic network formation models.

We address this challenge by investigating a very recently introduced model by Goyal et al. [WINE'16] which focuses on robust networks in the presence of a strong adversary who attacks (and kills) nodes in the network and lets this attack spread virus-like to neighboring nodes and their neighbors. 
Our main result is to establish that this natural model is one of the few exceptions which are both realistic and computationally tractable. In particular, we answer an open question of Goyal et al. by providing an efficient algorithm for computing a best response strategy, which implies that deciding whether the game has reached a Nash equilibrium can be done efficiently as well. Our algorithm essentially solves the problem of computing a minimal connection to a network which maximizes the reachability while hedging against severe attacks on the network infrastructure and may thus be of independent interest.   
\end{abstract}

\section{Introduction}
Many of today's important networks, most prominently the Internet, are essentially the outcome of an unsupervised decentralized network formation process among many selfish entities~\cite{Pap01}.
In the case of the Internet these selfish entities are Autonomous Systems (AS) which interconnect via peering agreements and thereby create a connected network of networks.
Each AS can be understood as a selfish player who strategically chooses a subset of other ASs to directly connect with.
Each inter-AS-connection is costly and yields a benefit and a risk.
The benefit is a reliable direct link towards the other AS.
However, such a connection may be used by malicious software and thus harbors the risk of collateral damage if a neighboring AS is attacked.

The field of strategic network formation, started by the seminal works of Jackson \& Wolinsky~\cite{JW96}, Bala \& Goyal~\cite{BG00} and Fabrikant et al.~\cite{Fab03}, studies the global structure and properties of networks formed by individual players making decentralized local strategic choices.
In all considered models there are players trying to optimize their own benefit, while minimizing their individual cost.
It is far from obvious why a collection of individual selfish strategies eventually results in useful and reliable network topologies like the Internet.
Studying the properties of such models aims for revealing insights about properties of existing naturally grown networks and inspiring methods to improve them.


Required features of any Internet-like communication network are reachability and robustness.
Such networks have to ensure that even in case of cascading edge or node failures caused by technical defects or malicious attacks, e.g. DDoS-attacks or viruses, most participating nodes can still communicate.
This important focus on network robustness has long been neglected and is now a very recent endeavor in the strategic network formation community, see e.g. \cite{Kli11,MMO15,GJKKM15,CLMM16}.
We contribute to this endeavor by proving that the very recently introduced natural model by Goyal et al.~\cite{GJKKM15,GJKKM15arxiv} is one of the few exceptions of a tractable network formation model. In particular, we provide an efficient algorithm for computing a utility maximizing strategy for their elegant model, which can be used to efficiently decide whether a network is in Nash equilibrium.
Thus, our algorithm allows the model of Goyal et al. to be used to predict real world phenomena in large scale simulations and to analyze real world networks. Moreover, predicted structural properties of equilibria, which are obtained by a tractable model, are more plausible since it can be assumed that computationally bounded agents in the real world will eventually end up in an equilibrium state or at least close to it.

\subsection{Related Work}
We focus on the model for strategic network formation with attack and immunization recently proposed by Goyal et al.~\cite{GJKKM15,GJKKM15arxiv}.
This model essentially augments the well-known reachability model by Bala \& Goyal~\cite{BG00} with robustness considerations. 
In particular, different types of adversaries are introduced which attack (and destroy) a node of the network.
This attack then spreads virus-like to neighboring nodes and destroys them as well.
Besides deciding which links to form, players also decide whether they want to buy immunization against eventual attacks. The model is the first model which incorporates network formation and immunization decisions at the same time. 

On the one hand, the authors of~\cite{GJKKM15,GJKKM15arxiv} provide beautiful structural results for their model. For example, showing that equilibrium networks are much more diverse than in the non-robust version, proving that the amount of edge overbuilding due to robustness concerns is small and establishing that equilibrium networks generally achieve very high social welfare.
On the other hand, the authors raise the intriguing open problem of settling the complexity of computing a best response strategy 
in their model\footnote{This question was raised in \cite{GJKKM15arxiv} for the maximum carnage adversary and is replaced in \cite{GJKKM15} with a reference to our preprint of the present paper. The preprint also appeared as a Brief Announcement~\cite{FIKLNS}}.

Computing a best response in network formation games can be done in polynomial time for the non-robust reachability model~\cite{BG00} and if the allowed strategy changes are very simple~\cite{L12}.
However, these examples are exceptions.
The existence of an efficient best response algorithm for a network formation game is in general a rare gem. For almost all related network formation models, e.g. \cite{Fab03,MS10,Ehs15,Bilo14a,Bilo15,CL15,CLMM16}, where players strive for a central position in the network, it has been shown that the problem is indeed \NP-hard.

To the best of our knowledge, besides the model by Goyal et al.~\cite{GJKKM15,GJKKM15arxiv} there are only a few other models which combine selfish network formation with robustness considerations and all of them consider a much weaker adversary which can only destroy a single edge.
The earliest are models by Bala \& Goyal~\cite{BG03}
and Kliemann~\cite{Kli11}, both essentially augment the model by Bala \& Goyal~\cite{BG00} with single edge failures.
Other related models are by Meirom et al.~\cite{MMO15} and Chauhan et al.~\cite{CLMM16}.
Both latter models consider players who try to be as central as possible in the created networks but at the same time want to protect themselves against single edge failures.
In~\cite{MMO15} heterogeneous players
are considered whereas in~\cite{CLMM16} all players are homogeneous.
The complexity status for computing a best response was only settled for the model by Chauhan et al.~\cite{CLMM16} where this problem was proven to be \NP-hard.

Apart from network formation games, also vaccination games, e.g. \cite{ACY06,CDK10,KRSS10,SAV14}, are related. There the network is fixed and the selfish nodes only have to decide if the want to immunize or not. Computing a best response in these models is trivial (there are only two strategies) but pure Nash equilibria may not exist. \vspace*{0.1cm} 

\subsection{Our Contribution}
We establish that the natural model by Goyal et al.~\cite{GJKKM15,GJKKM15arxiv} is one of the few examples of a tractable realistic model for strategic network formation and thereby answer an open question by these authors. In particular, we provide an efficient algorithm for computing a best response strategy for their main model, i.e. the ``maximum carnage'' adversary which tries to kill as many nodes as possible, and for the natural variant which employs the even stronger random attack adversary. 
Moreover, we employ our algorithm in empirical simulations which augment the extensive simulations presented in~\cite{GJKKM15,GJKKM15arxiv}, where players were allowed to perform only heavily restricted strategy changes.
We observe fast and reliable convergence, despite the fact that the authors of~\cite{GJKKM15,GJKKM15arxiv} have shown that best response dynamics can cycle, a phenomenon also observed in other network formation games~\cite{KL13}.

\subsection{Organization of the Paper}
In Section~\ref{sec:model} we introduce the model by Goyal et al.~\cite{GJKKM15,GJKKM15arxiv} and some additional definitions and notation.
Our main contribution follows in Section~\ref{sec:algorithm}, where we introduce the main ideas and a detailed description of our algorithm. Several subroutines are used and we first show how these subroutines work together to solve the main problem. After providing the big picture, we then supply the subroutines with all details and correctness proofs separately.

We demonstrate the behavior of our algorithm experimentally in Section~\ref{sec:empirical}.
Moreover, in Section~\ref{sec:uniformAttacker}, we consider the random attack adversary which is less predictable than the maximum carnage adversary.
There we show how to adapt our algorithm with a few minor tweaks to cope with this more complex scenario.
We conclude in Section~\ref{sec:conclusion} where we discuss possible future research directions.


\section{Model}
\label{sec:model}
We work with the strategic network formation model proposed by Goyal et al.~\cite{GJKKM15,GJKKM15arxiv} and mostly use their notation.
In this model the $n$ nodes of a network $G = (V, E)$ correspond to individual players $v_1,\dots,v_n$.
We will thus use the terms node, vertex and player interchangeably.
The edge set $E$ is determined by the players' strategic behavior as follows.
Each player $v_i \in V$ can decide to buy undirected edges to a subset of other players, paying $\alpha > 0$ per edge, where $\alpha$ is some fixed parameter of the model. 

If player $v_i$ decides to buy the edge to node $v_j$, then we say that the edge $\{v_i,v_j\}$ is owned and paid for by player $v_i$.\footnote{If both players $v_i$ and $v_j$ decide to buy the edge $\{v_i,v_j\}$, then this results in a multi-edge between $v_i$ and $v_j$. 
However, it is easy to see that best response strategies will never contain multi-edges which is why we ignore them completely.}
Buying an undirected edge entails connectivity benefits and risks for both participating endpoints. 
In order to cope with these risks, each player can also decide to buy immunization against attacks at a cost of $\beta > 0$, which is also a fixed parameter of the model.
We call a player \emph{immunized} if this player decides to buy immunization, and \emph{vulnerable} otherwise.

The strategy $s_i = (x_i,y_i)$ of player $v_i$ consists of the set $x_i \subseteq V \setminus \{v_i\}$ of the nodes to buy an edge to, and the immunization choice $y_i \in \{0, 1\}$, where $y_i = 1$ if and only if player $v_i$ decides to immunize.
The strategy profile $\mathbf{s} = (s_1, \ldots, s_n)$ of all players then induces an undirected graph 
$$G(\mathbf{s}) = \left(V,\bigcup_{v_i\in V}\bigcup_{v_j\in x_i} \{v_i,v_j\}\right).$$
The immunization choices $y_1,\dots,y_n$ in $\mathbf{s}$ partition $V$ into the set of immunized players $\mathcal{I} \subseteq V$ and vulnerable players $\mathcal{U} = V \setminus \mathcal{I}$.
The components in the induced subgraph $G[\mathcal{U}]$ are called \emph{vulnerable regions} and the set of those regions will be denoted by $\mathcal{R}_\mathcal{U}$.
The vulnerable region of any vulnerable player $v_i \in \mathcal{U}$ is $\mathcal{R}_\mathcal{U}(v_i)$.
\emph{Immunized regions} $\mathcal{R}_\mathcal{I}$ are defined analogously as the components of the induced subgraph $G[\mathcal{I}]$.

After the network $G(\mathbf{s})$ is built, we assume that an adversary attacks one vulnerable player according to a strategy known to the players. We consider mostly the maximum carnage adversary~\cite{GJKKM15arxiv,GJKKM15} which tries to destroy as many nodes of the network as possible.
To achieve this, the adversary chooses a vulnerable region of maximum size and attacks some player in that region.
If there is more than one such region with maximum size, then one of them is chosen uniformly at random.
If a vulnerable player $v_i$ is attacked, then $v_i$ will be destroyed and the attack spreads to all vulnerable neighbors of $v_i$, eventually destroying all players in $v_i$'s vulnerable region $\mathcal{R}_{\mathcal{U}}(v_i)$.
Let $t_{max} = \max_{R \in \mathcal{R}_\mathcal{U}} \{ |R| \}$ be the number of nodes in the vulnerable region of maximum size and $\mathcal{T} = \{ v_i \in \mathcal{U} \mid |\mathcal{R}_\mathcal{U}(v_i)| = t_{max} \}$ is the corresponding set of nodes which may be targeted by the adversary.
The set of targeted regions is $\mathcal{R}_\mathcal{T} = \{ R \in \mathcal{R}_\mathcal{U} \mid |R| = t_{max}\}$, and $\mathcal{R}_\mathcal{T}(v_i)$ is the targeted region of a player $v_i \in \mathcal{T}$.
Thus, if $v_i \in \mathcal{T}$ is attacked, then all players in the region $\mathcal{R}_{\mathcal{T}}(v_i)$ will be destroyed.

The \emph{utility} of a player $v_i$ in network $G(\mathbf{s})$ is defined as the expected number of nodes reachable by $v_i$ after the adversarial attack on network $G(\mathbf{s})$ (zero in case $v_i$ was destroyed) less $v_i$'s expenditures for buying edges and immunization.
More formally, let $CC_i(t)$ be the connected component of $v_i$ after an attack to node $v_t \in \mathcal{T}$ and let $|CC_i(t)|$ denote its number of nodes.
Then the utility (or profit)  $u_i(\mathbf{s})$ of $v_i$ in the strategy profile $\mathbf{s}$ is\vspace*{-0.2cm}
$$u_i(\mathbf{s}) = \frac{1}{|\mathcal{T}|} \left( \sum_{v_t \in \mathcal{T}} |CC_i(t)| \right) - |x_i| \cdot \alpha - y_i \cdot \beta.\vspace*{-0.2cm}$$
Fixing the strategies of all other players, the \emph{best response} of a player $v_i$ is a strategy $s_i^* = (x_i^*,y_i^*)$ which maximizes $v_i$'s utility $u_i\big((s_1, \ldots, s_{i-1}, s_i^*, s_{i+1}, \ldots, s_n)\big)$.
We will call the strategy change to $s_i^*$ a best response for player $v_i$ in the network $G(\mathbf{s})$, if changing from strategy $s_i \in \mathbf{s}$ to strategy $s_i^*$ is the best possible strategy for player $v_i$ if no other player changes her strategy.

Consider what happens if we remove node $v_i$ from the network $G(\mathbf{s}) = (V,E)$ and we call the obtained network $G(\mathbf{s}) \setminus v_i$.
In this case, $G(\mathbf{s}) \setminus v_i$ consists of connected components $C_1,\dots,C_\ell$.
The edge-set $x_i^*$ can thus be partitioned into $\ell$ subsets $x_i^*(C_1),\dots,x_i^*(C_\ell)$, where $x_i^*(C_z)$ denotes the set of nodes in $C_z$ to which $v_i$ buys an edge under best response strategy $s_i^*$.
We will say that $x_i^*(C_z)$ is an \emph{optimal partner set} for component $C_z$.
Therefore,  $x_i^*$ is the union of optimal partner sets for all connected components in $G(\mathbf{s}) \setminus v_i$.

A best response is calculated for one arbitrary but fixed player $\ap$, which we call the \textit{active player}. 
Furthermore let $\mathcal{C}$ be the set of connected components which exist in $G(\mathbf{s}) \setminus \ap$. Let 
\begin{align*}
 \CU &= \{ C \in \mathcal{C} \mid C \cap \mathcal{I} = \emptyset \},\\
 \CI &= \mathcal{C} \setminus \CU, \\
 \Cinc &= \{ C \in \mathcal{C} \mid \exists u \in C : \{u,v\} \in E \},
\end{align*}
where $\CU$ is the set of components in which all vertices are vulnerable,
$\CI$ is the set of components which contain at least one immunized vertex and $\Cinc$ is the set of components to which player $\ap$ is connected through incoming edges bought by some other player.


\section{The Best Response Algorithm}\label{sec:algorithm}

A naive approach to calculate the best response for player $\ap$ would consider all $2^n$ possible strategies and select one that yields the best utility.
This is clearly infeasible for a larger number of players.
\subsection{Key Observations}
Our algorithm exploits three observations to reduce the complexity from exponential to polynomial:

\textbf{Observation 1:} The network $G(\mathbf{s}) \setminus \ap$ may consist of $\ell$ connected components that can be dealt with independently for most decisions.
As long as the set of possible targets of the adversary does not change, the best response of $\ap$ can be constructed by first choosing components to which a connection is profitable and then choosing for each of those components an optimal set of nodes within the respective component to build edges to.
	
\textbf{Observation 2:} Homogeneous components in $G(\mathbf{s}) \setminus \ap$, which consist of only vulnerable or only immunized nodes, provide the same benefit no matter whether $\ap$ connects to them with one or with more than one edge. Thus the connection decision is a binary decision for those components.

\textbf{Observation 3:} Mixed components in $G(\mathbf{s}) \setminus \ap$, which contain both immunized and vulnerable nodes, consist of homogeneous regions that again have the property that at most one edge per homogeneous region can be profitable.
Merging those regions into block nodes forms an auxilliary tree, called \metatree, which we use in an efficient dynamic programming algorithm to compute the most profitable subset of regions to connect with.

Using the above observations, we construct an algorithm with a worst-case run time complexity of $\mathcal{O}(n^4 + k^5)$ where $n$ is the number of nodes of the network and $k$ the number of block nodes in the largest \metatree.

\subsection{Main Algorithm}
\label{subsec:mainalgo}

\newcommand{\CSet}{\mathcal{A}}

In this section, we introduce our main algorithm, called \textsc{BestResponseComputation}. Its pseudo code can be found in Algorithm~\ref{alg:bestResponseMain} and a schematic overview is depicted in Fig.~\ref{fig:overview}. 
\begin{figure}[h!]
	\centering
	\includegraphics[width=9cm]{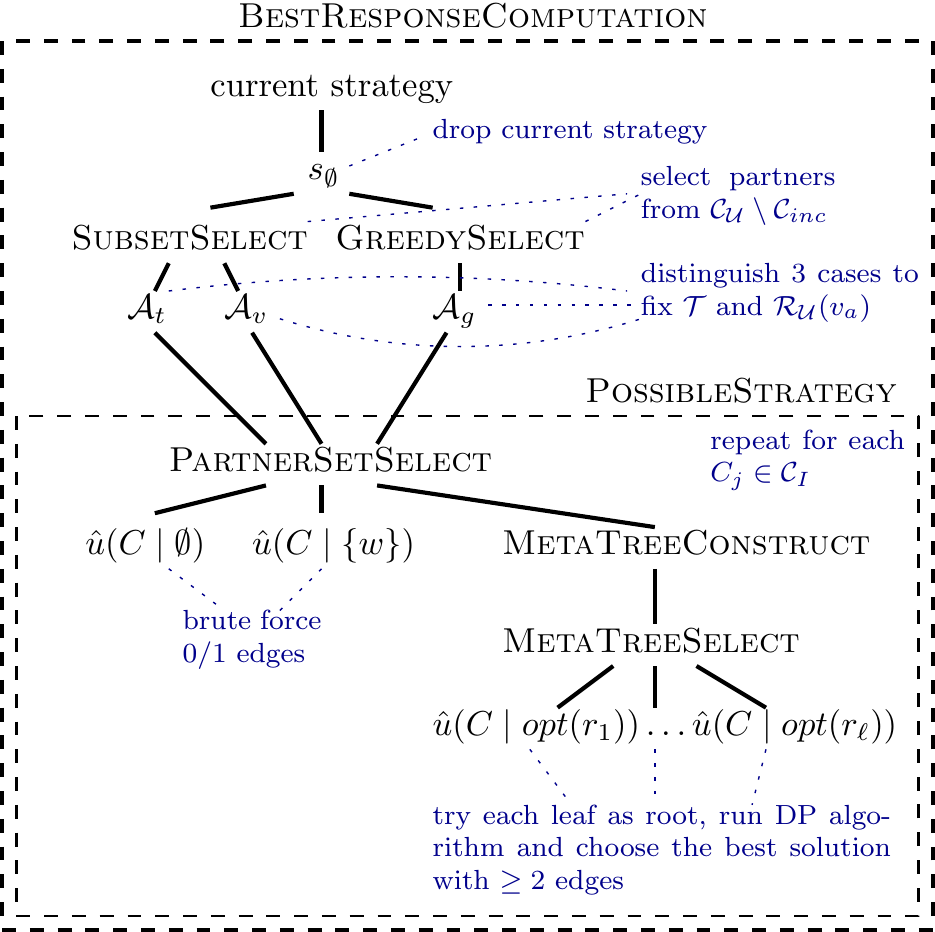}
	\caption{Schematic overview of the best response algorithm.}
	\label{fig:overview}
	
\end{figure}

Our algorithm solves the problem of finding a best response strategy by considering both options of buying or not buying immunization and computing for both cases the best possible set of edges to buy.
Thus, the first step of \textsc{BestResponseComputation} is to drop the current strategy of the active player $\ap$ and to replace it with the empty strategy $s_\emptyset = (\emptyset,0)$ in which player $\ap$ does not buy any edge and does not buy immunization.
Then the resulting strategy profile $\mathbf{s'} = (s_1, \ldots, s_{\aps-1}, s_\emptyset, s_{\aps+1}, \ldots, s_n)$ and the set of connected components $\CU$ and $\CI$ with respect to network $G(\mathbf{s'})\setminus \ap$ is considered.

The subroutine \textsc{SubsetSelect}, which is described in Section~\ref{sec:2D-KP}, determines the optimal sets of components of $\CU$ to connect to if $\ap$ does not immunize. This is done by solving an adjusted Knapsack problem which involes only small numbers. Two such sets of components, called $\mathcal{A}_t$ and $\mathcal{A}_v$, are computed depending on player $\ap$ becoming targeted or not by connecting to these components.
Additionally the subroutine \textsc{GreedySelect}, which is described in Section~\ref{sec:Greedy-Select}, computes a best possible subset of components of $\CU$ to connect with in case $\ap$ buys immunization by greedily connecting to all components in $\CU$ that are profitable.

The challenging part of the problem is to cope with the connected components in $\CI$ which also contain immunized nodes.
For such components our algorithm detects and merges equivalent nodes and thereby simplifies these components to an auxiliary tree structure, which we call the \metatree.
This tree is then used in a dynamic programming fashion to efficiently compute the best possible set of edges to buy towards nodes within the respective component.
Thus, our approach for handling components containing immunized nodes can be understood as first performing a data-reduction similar to many approaches for kernelization in the realm of Parameterized Algorithmics~\cite{DF13} and then solving the reduced problem via dynamic programming. 
\begin{center}
\begin{minipage}{14cm}
\begin{algorithm}[H]
    \KwIn{Strategies $\mathbf{s} = (s_1, \ldots, s_n)$, Player $\ap, 1 \leq \aps \leq n$}
    \KwOut{Best response strategy of player $\ap$ denoted by $s_\aps = (x_\aps, y_\aps)$}

    $s_\emptyset = (\emptyset, 0)$\;
    Let $G(\mathbf{s'})$ be the induced game state with $\mathbf{s'} = (s_1, \ldots, s_{\aps-1}, s_\emptyset, s_{\aps+1}, \ldots, s_n)$\;

    Let $\CSet_t, \CSet_v$ be the solutions of \textsc{SubsetSelect} on $\CU$\;
    Let $\CSet_g$ be the solution of \textsc{GreedySelect} on $\CU$\;

    $s_t$ = \textsc{PossibleStrategy}($\CSet_t$, 0)\;
    $s_v$ = \textsc{PossibleStrategy}($\CSet_v$, 0)\;
    $s_g$ = \textsc{PossibleStrategy}($\CSet_g$, 1)\;

    $S = \{s_{\emptyset}, s_t, s_v, s_g\}$\;
    \Return{\textnormal{strategy $s \in S$ which maximizes $\ap$'s utility}}\;

    \caption{\textsc{BestResponseComputation}}
    \label{alg:bestResponseMain}
\end{algorithm}
\end{minipage}
\end{center}

The subroutine \textsc{PossibleStrategy}, see Algorithm~\ref{alg:possiblepartners}, obtains the best set of nodes in components in $\CI$.
As this set depends on the number of targeted regions, it has to be determined for several cases independently.
These cases are $\ap$ not being immunized and not being targeted, $\ap$ not being immunized but being targeted, and $\ap$ being immunized.
For each case, the subroutine \textsc{PossibleStrategy} first chooses an arbitrary single edge to buy into the previously selected components from $\CU$.
\begin{center}
\begin{minipage}{15.5cm}
\begin{algorithm}[H]
        \KwOut{Best strategy with single edges to components in $\mathcal{A}$, given immunization $y_\aps$}
        $M := \emptyset$\;
        \ForEach{$C \in \mathcal{A}$} {
            $M = M \cup v_i$ for an arbitrary node $v_i \in C$\;
        }

        Locally, add edges to nodes in $M$, update $\mathcal{R}_\mathcal{I}, \mathcal{R}_\mathcal{U}, \mathcal{R}_\mathcal{T}$ according to $y_\aps$ and $M$\;
        $B \leftarrow \emptyset$\;
        \ForEach{component $C \in \CI$}{
            $B \leftarrow B ~\cup$ \textsc{PartnerSetSelect}$(C)$
        }
        \Return{($M \cup B , y_{\aps}$)}\;

    \caption{\textsc{PossibleStrategy}($\mathcal{A}$, $y_{\aps}$)}
    \label{alg:possiblepartners}
\end{algorithm}    
\end{minipage}
\end{center}
Then the best set of edges to buy into components in $\CI$ is computed independently for each component $C\in \CI$ via the subroutines \textsc{PartnerSetSelect} (see Section~\ref{algo:partnersetselect}), \textsc{MetaTreeConstruct} (see Section~\ref{sec:mt-construction}) and \textsc{MetaTreeSelect} (see Section~\ref{subsubsec:twoNewConnections}).
The union of the obtained sets is then returned.
Finally, the algorithm compares the empty strategy and the individually obtained best possible strategies for the above mentioned cases and selects the one which maximizes player $\ap$'s utility.

We establish that all the mentioned subroutines work together as desired in Section~\ref{sec:combination}.
We specify the constructed \metatree in Section~\ref{sec:mt-construction} and the \textsc{MetaTreeSelect} algorithm in Section~\ref{subsubsec:twoNewConnections}.

The run time of our best response algorithm heavily depends on the size of the largest obtained \metatree and we achieve a worst-case run time of
$\mathcal{O}(n^4 + k^5)$ for the maximum carnage adversary and $\mathcal{O}(n^4 + nk^5)$ for the random attack adversary, where $n$ is the number of nodes in the network and $k$ is the number of blocks in the largest \metatree (See Section~\ref{sec:costanalysis} and Section~\ref{sec:uniformAttacker}). In the worst case, this yields a run time of $\mathcal{O}(n^5)$ and $\mathcal{O}(n^6)$, respectively.
To contrast this worst-case bound, we also provide empirical results in Section~\ref{sec:empirical} showing that $k$ is usually much smaller than $n$, which emphasizes the effectiveness of our data-reduction and thereby shows that our algorithm is expected to be much faster than the worst-case upper bound.

\subsection{Correctness of \textsc{PossibleStrategy} and  \textsc{Best\-ResponseComputation}}
\label{sec:combination}

We now prove the correctness of \textsc{PossibleStrategy} and \textsc{BestResponseComputation} under the assumption that the subroutines \textsc{SubsetSelect}, \textsc{Greedy\-Select} and \textsc{PartnerSetSelect} are correct.

It is important to note, that \textsc{PartnerSetSelect} will not only return a best set of edges to buy into a component $C\in \CI$ but that all these edges connect to immunized nodes (see Lemma~\ref{lm:onlyImmunized}).
Thus, these edges do not change the set of targeted nodes $\mathcal{T}$ or player $\ap$'s vulnerable region $\mathcal{R}_{\mathcal{U}}(\ap)$.   

We start with proving the correctness of subroutine \textsc{PossibleStrategy}.
This subroutine gets a set of components from $\CU$ as input and then chooses an arbitrary node from each component to buy an edge to.
First, we show that this is correct.
\begin{restatable}{lemma}{restateMaxOneEdge}
	\label{lm:maxOneEdge}
	Buying at most one edge into any component $C \in \CU$ yields maximum profit for player $\ap$.
\end{restatable}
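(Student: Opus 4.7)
The plan is a local-swap argument: assume for contradiction that some best response $s_\aps^* = (x_\aps^*, y_\aps^*)$ contains $k \geq 2$ edges from $\ap$ to vertices of a single component $C \in \CU$, and define $s'_\aps$ by retaining an arbitrary one of those edges and deleting the other $k-1$. The edge cost strictly drops by $(k-1)\alpha > 0$, so it suffices to show that $\ap$'s expected reachable component size is unchanged under $s'_\aps$. Combined with the $\alpha$-saving this yields $u_\aps(s'_\aps) > u_\aps(s_\aps^*)$, contradicting optimality; in particular every best response satisfies the claimed property.

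The core step is to check that the deletion affects neither the adversary's target set nor $\ap$'s post-attack reachability. For the target set, I split on $y_\aps^*$. If $\ap$ is immunized then $\ap \notin \mathcal{U}$, so $\ap$'s incident edges never appear in $G[\mathcal{U}]$, and $\mathcal{R}_\mathcal{U}$ is literally identical under $s_\aps^*$ and $s'_\aps$. If $\ap$ is vulnerable then, since every vertex of $C$ is vulnerable (by $C \in \CU$) and $C$ is internally connected, both strategies fuse $\mathcal{R}_\mathcal{U}(\ap)$ with $C$ into exactly the same vulnerable region; a single edge already suffices for that fusion. In either case $\mathcal{R}_\mathcal{U}$, $t_{max}$, $\mathcal{T}$, and the family $\{\mathcal{R}_\mathcal{T}(v_t) : v_t \in \mathcal{T}\}$ coincide under the two strategies, so the adversary behaves identically.

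For reachability, I would fix any $v_t \in \mathcal{T}$ and compare $CC_\aps(t)$ in $G(\mathbf{s}^*)$ and $G(\mathbf{s}')$ after deleting $\mathcal{R}_\mathcal{T}(v_t)$. Since $C$ lies entirely inside a single vulnerable region, either $C \subseteq \mathcal{R}_\mathcal{T}(v_t)$, in which case all of $C$ is destroyed and each of the $k-1$ deleted edges has no surviving endpoint, or $C \cap \mathcal{R}_\mathcal{T}(v_t) = \emptyset$, in which case the single retained $\ap$-to-$C$ edge already connects $\ap$ to every vertex of $C$ because $C$ is internally connected. In both sub-cases $|CC_\aps(t)|$ is the same under $s_\aps^*$ and $s'_\aps$, so the expectation $\frac{1}{|\mathcal{T}|}\sum_{v_t \in \mathcal{T}} |CC_\aps(t)|$ is preserved and the cost saving $(k-1)\alpha$ translates one-to-one into a utility gain.

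The routine arithmetic is trivial; the point that needs the most care, and which I view as the main obstacle, is the target-set invariance. Specifically one must verify that (i) $\ap$'s own incident edges cannot split, enlarge, or otherwise alter the region $C$ inside $G[\mathcal{U}]$, which relies crucially on $C \in \CU$ so that no node of $C$ is immunized and on the fact that $\ap$'s edges are all incident to $\ap$ itself; and (ii) a single edge is already sufficient to merge $\mathcal{R}_\mathcal{U}(\ap)$ with $C$ when $\ap$ is vulnerable, so that the second, third, $\dots$, $k$-th edges are genuinely redundant rather than topologically distinct connections. Once both are established, the strict improvement $u_\aps(s'_\aps) = u_\aps(s_\aps^*) + (k-1)\alpha$ is immediate, completing the proof.
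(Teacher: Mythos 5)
Your proposal is correct and rests on exactly the same core observation as the paper's proof: since $C \in \CU$ is entirely vulnerable and connected, an attack either destroys all of $C$ or leaves it intact, so a single edge already yields all attainable connectivity to $C$ and any further edge only adds cost. Your write-up just makes explicit (via the exchange argument and the check that $\mathcal{T}$ and the targeted regions are unaffected by dropping the redundant edges) what the paper's brief argument leaves implicit.
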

\begin{proof}
	Observe that in the case of an attack on component $C$ it is destroyed completely, if the attack takes place elsewhere $C$ stays connected.
	Hence, a single edge to any player in $C$ provides connectivity to all surviving players in $C$ in all cases, more edges therefore would only increase expenses but not connectivity.
\end{proof}

\noindent Thus, the optimal choice of edges into components in $\CU$ only depends on the right choice of the components by \textsc{SubsetSelect} and \textsc{GreedySelect}.

For components in $\CI$ we assume that \textsc{PartnerSetSelect} is correct and that buying edges into components in $\CI$ do not change $\mathcal{T}$ or $\mathcal{R}_{\mathcal{U}}(\ap)$.
We therefore only have to prove that under this assumption it is indeed correct to handle all components in $\CI$ independently.
For establishing this result, we first have to define player $\ap$'s expected profit contribution of a single component $C\in \CI$.

\subsubsection{Calculation of the Expected Profit Contribution of a Single Component}
\label{subsubsec:partialProfitContribution}

Assume that the adversary attacks vertex $t$.
In this case, by abusing notation, let $CC_\ap(t) \cap C$ denote the set of nodes in component $C$ that are still connected to player $\ap$ after the attack on vertex $t$ and therefore contribute to $\ap$'s profit.
In order to compute the expected profit contribution of a component $C \in \CI$ for player $\ap$, we simply have to take the expectation of $|CC_\ap(t) \cap C|$ over all possible choices for an attack.

Let $\hat{u}_\ap(C \mid \Delta)$ be the profit contribution for player $\ap$ of a component $C \in \CI$ if $\ap$ buys edges to all nodes in the set $\Delta$.
Thus,
$$
\hat{u}_\ap(C \mid \Delta) = \frac{1}{|\mathcal{T}|}\sum_{t \in \mathcal{T}} \left|CC_\ap(t) \cap C \right| - \alpha |\Delta|.
$$

\subsubsection{Conditional Independence within $\CI$}\label{subsubsec_proof_component_independence}

We proceed to proving that handling components in $\CI$ independently is indeed correct.
\begin{restatable}{lemma}{restateDealIndependently}
	\label{lm:dealIndependently}
	Player $\ap$ can deal with distinct components from $\CI$ independently, if $\mathcal{T}$ and $\mathcal{R}_\mathcal{U}(\ap)$ do not change.
\end{restatable}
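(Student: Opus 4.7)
The plan is to show that, once $\mathcal{T}$ and $\mathcal{R}_\mathcal{U}(\ap)$ are fixed, player $\ap$'s utility splits into an additive sum of contributions, one per component of $\CI$, with each summand depending only on the edges $\ap$ buys into that particular component. Independent optimization per component then follows immediately by distributing the maximum over the sum.

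First I would expand $u_\ap(\mathbf{s})$ and separate $\ap$'s chosen edges into components as $x_\aps = M \cup \bigcup_{C\in\CI}\Delta_C$, so that the edge-cost term already factors as $\alpha|M| + \sum_{C\in\CI}\alpha|\Delta_C|$ plus the constant $y_\aps\beta$. The content of the lemma then reduces to establishing, for every attack target $v_t \in \mathcal{T}$, the decomposition
\[
|CC_\ap(t)| \;=\; \mathbf{1}_{\{\ap\text{ survives}\}}\Bigl(1 + \sum_{C \in \mathcal{C}} |CC_\ap(t)\cap C|\Bigr),
\]
together with the claim that for each $C \in \CI$ the quantity $|CC_\ap(t)\cap C|$ depends only on $\Delta_C$ and data that is fixed by hypothesis, and not on any $\Delta_{C'}$ with $C' \neq C$. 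Since whether $\ap$ survives the attack is determined solely by $y_\aps$ and by whether $v_t \in \mathcal{R}_\mathcal{U}(\ap)$, both of which are fixed, the constant and indicator factors pose no difficulty.

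To control $|CC_\ap(t)\cap C|$ I would argue its two ingredients separately. \emph{Survival inside $C$}: by Lemma~\ref{lm:onlyImmunized}, every edge in $\Delta_C$ lands on an immunized vertex of $C$, so no edge of $\ap$ merges a vulnerable region inside $C$ with one outside; together with the hypothesis that $\mathcal{R}_\mathcal{U}(\ap)$ does not change, this means the vulnerable regions that intersect $C$ are exactly those present in $G(\mathbf{s}')\setminus\ap$ restricted to $C$, regardless of any $\Delta_{C'}$. Hence the set of nodes in $C$ destroyed by attacking $v_t$ is a function of $C$ and the fixed data alone. \emph{Reachability from $\ap$}: because $C$ is a connected component of $G(\mathbf{s}')\setminus\ap$, any simple path in $G(\mathbf{s})$ from $\ap$ to a node $u \in C$ must use exactly one edge incident to $\ap$ whose other endpoint lies in $C$, and must stay in $C$ thereafter (any attempt to leave $C$ would force the path back through $\ap$, contradicting simplicity). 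Therefore whether $u$ remains connected to $\ap$ after the attack is determined by $\Delta_C$, the (fixed) incoming edges from $C$ to $\ap$, and the surviving subgraph of $C$.

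Combining the two points, $\hat u_\ap(C\mid\Delta_C) = \tfrac{1}{|\mathcal{T}|}\sum_{t\in\mathcal{T}} |CC_\ap(t)\cap C| - \alpha|\Delta_C|$ depends only on $\Delta_C$, and $u_\ap(\mathbf{s})$ differs from $\sum_{C\in\CI}\hat u_\ap(C\mid\Delta_C)$ by terms that are constant in $\{\Delta_C\}_{C\in\CI}$. Maximizing over these variables therefore decouples into $|\CI|$ independent maximizations, proving the lemma. The main obstacle I expect is the bookkeeping that excludes subtle indirect interactions between components through shared vulnerable regions; this is precisely where the hypothesis on $\mathcal{R}_\mathcal{U}(\ap)$ and the prior Lemma~\ref{lm:onlyImmunized} are doing all of the real work, after which the topological separation delivered by $C$ being a connected component of $G(\mathbf{s}')\setminus\ap$ finishes the argument cleanly.
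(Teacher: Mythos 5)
Your proposal is correct and follows essentially the same route as the paper: split the expected utility by linearity into per-component contributions $\hat u_\ap(C\mid\Delta_C)$ plus a remainder fixed by the hypothesis on $\mathcal{T}$ and $\mathcal{R}_\mathcal{U}(\ap)$, then use the fact that every path from $C$ to the rest of the graph passes through $\ap$ to decouple the maximizations. Your extra appeal to Lemma~\ref{lm:onlyImmunized} is not actually needed (and that lemma only guarantees the existence of an all-immunized optimal partner set, not that every $\Delta_C$ has this form); the hypothesis that $\mathcal{R}_\mathcal{U}(\ap)$ is unchanged already rules out cross-component merging of vulnerable regions, which is exactly how the paper argues.
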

\begin{proof}
	Recall that
	$$
	u_\ap({\bf s}) = \frac{1}{\mathcal{T}} \sum_{t \in \mathcal{T}} |CC_\ap(t)| - |x_\aps|\alpha - y_\aps \beta.
	$$
	Every summand can be split according to the components that player $v$ can buy edges to.
	Let $\Delta_C$ be the edges $\ap$ buys into $C$. Then
	\begin{align*}
	u_\ap({\bf s}) = \frac{1}{\mathcal{T}}  &\sum_{t \in \mathcal{T}} \Big(\sum_{C \in \CI } \big( |CC_\ap(t) \cap C| - |\Delta_C|\alpha\big)\\
	 + &\sum_{C \in \CU } \big( |CC_\ap(t) \cap C| - |\Delta_C|\alpha\big)\Big) - y_\aps \beta.
	\end{align*}
	This can be written as
	\begin{align*}
	u_\ap({\bf s}) =& \sum_{C \in \CI } \left( \frac{1}{\mathcal{T}} \sum_{t \in \mathcal{T}} \big|CC_\ap(t) \cap C\big| - |\Delta_C|\alpha \right)\\ &+
	\frac{1}{\mathcal{T}} \sum_{t \in \mathcal{T}} \left( \sum_{C \in \CU } \left( \big|CC_\ap(t) \cap C\big| - |\Delta_C|\alpha \right) \right)
	- y_\aps \beta \\
	=& \sum_{C \in \CI } \hat{u}_\ap(C \mid \Delta_C)\\
	&+
	\underbrace{\frac{1}{\mathcal{T}} \sum_{t \in \mathcal{T}} \left( \sum_{C \in \CU } \left( \big|CC_\ap(t) \cap C\big| - |\Delta_C|\alpha \right) \right)
		- y_\aps \beta}_{(**)},
	\end{align*}
	using linearity of expectation and the definition of $\hat{u}$.
	Since both $\mathcal{T}$ and $\mathcal{R_U}(\ap)$ remain unchanged,
	$(**)$ is fixed.
	
	The choice of nodes to connect to within any component $C$ does not
	influence $\ap$'s connectivity to nodes outside of $C$ as every path
	from some node in $C$ to some node outside $C$ traverses node
	$\ap$. Thus, the choice of nodes in $C$ is independent from the choice
	of nodes in other components. Hence, the first summation's terms are independent.
	
	Because every term is non-negative, the summands may be maximized
	independently of each other to find the best response.
\end{proof}

\noindent Lemma~\ref{lm:maxOneEdge} and Lemma~\ref{lm:dealIndependently} directly yield the following:
\begin{corollary}
	If \textsc{PartnerSetSelect} is correct and selects only edges to immunized nodes, then \textsc{Possible\-Stra\-tegy} is correct.
\end{corollary}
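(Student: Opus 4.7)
The plan is to unfold the definition of \textsc{PossibleStrategy} and show that each of its three stages is justified by one of the preceding lemmas. Concretely, I would fix the immunization choice $y_\aps$ and the input component set $\mathcal{A} \subseteq \CU$ and prove that, relative to these inputs, the output maximizes $\ap$'s utility among all strategies that use immunization $y_\aps$ and connect into exactly the components in $\mathcal{A}$ from $\CU$. The corollary then gives the claimed correctness guarantee for \textsc{PossibleStrategy}.

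First I would handle the edges into $\CU$. Lemma~\ref{lm:maxOneEdge} already tells us that buying more than one edge into any $C \in \mathcal{A}$ only inflates the edge cost without improving connectivity, so a single edge per component is optimal. The remaining point is that the particular node chosen inside each $C \in \mathcal{A}$ is irrelevant: a component in $\CU$ is either annihilated as a whole by the adversary or survives as a whole, and in the latter case $\ap$ reaches every surviving vertex through the one edge regardless of its endpoint. This justifies the initial loop that picks an arbitrary node $v_i \in C$ for each $C \in \mathcal{A}$ and assembles the set $M$.

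Second, I would argue that once these $\CU$-edges have been added and $\mathcal{R}_\mathcal{I}, \mathcal{R}_\mathcal{U}, \mathcal{R}_\mathcal{T}$ have been updated accordingly, the remaining task factors cleanly across components of $\CI$. By hypothesis \textsc{PartnerSetSelect} returns only edges to immunized vertices, so merging these into the graph leaves $\mathcal{T}$ and $\mathcal{R}_\mathcal{U}(\ap)$ intact; Lemma~\ref{lm:dealIndependently} then applies and guarantees that maximizing $\hat{u}_\ap(C \mid \Delta_C)$ separately for each $C \in \CI$ maximizes the sum $\sum_{C \in \CI} \hat{u}_\ap(C \mid \Delta_C)$, while the second bracketed expression in the decomposition of $u_\ap(\mathbf{s})$ is already fixed by $M$ and $y_\aps$. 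Since \textsc{PartnerSetSelect}$(C)$ is assumed correct, each $\hat{u}_\ap(C \mid \Delta_C)$ is indeed maximized, and taking the union $B$ over $C \in \CI$ yields a globally optimal choice of edges into $\CI$. Combined with the first step, the strategy $(M \cup B, y_\aps)$ returned by \textsc{PossibleStrategy} maximizes $\ap$'s utility subject to the input constraints.

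The only subtle point -- and the \emph{hard part} in this otherwise bookkeeping-level argument -- is ensuring that the sequence of updates is internally consistent: the proof must verify that the $\mathcal{T}$ seen by each call to \textsc{PartnerSetSelect} is exactly the $\mathcal{T}$ that will persist in the final graph. This requires both that the $\CU$-edges were already incorporated into the local region update \emph{before} \textsc{PartnerSetSelect} is invoked and the standing hypothesis that \textsc{PartnerSetSelect} produces no vulnerable endpoints, so that no later call can retroactively alter the target set of an earlier one. Once this consistency is made explicit, the corollary follows directly by chaining Lemma~\ref{lm:maxOneEdge} (optimality of the $M$-stage) with Lemma~\ref{lm:dealIndependently} (optimality of the $B$-stage).
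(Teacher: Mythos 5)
Your proposal is correct and follows essentially the same route as the paper, which states the corollary as a direct consequence of Lemma~\ref{lm:maxOneEdge} (single arbitrary edges into the chosen $\CU$-components suffice) and Lemma~\ref{lm:dealIndependently} (independence across $\CI$-components once $\mathcal{T}$ and $\mathcal{R}_\mathcal{U}(\ap)$ are unchanged, which the immunized-endpoints hypothesis guarantees). Your writeup merely spells out the chaining, including the consistency of the region updates, in more detail than the paper does.
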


\subsubsection{Correctness of \textsc{BestResponseComputation}}
Now we show that the main subroutines work together as desired. 
\begin{restatable}{theorem}{restateAlgmaincorrect}\label{algmaincorrect}
	If the algorithms \textsc{SubsetSelect}, \textsc{Greedy\-Select} and \textsc{PossibleStrategy} are correct, then \textsc{Best\-ResponseComputation} is correct.
\end{restatable}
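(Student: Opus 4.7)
The plan is to argue that any best response strategy $s^* = (x^*, y^*)$ falls into one of four scenarios, each matched by a candidate in the set $\{s_\emptyset, s_t, s_v, s_g\}$, and then to invoke the assumed correctness of the subroutines to show that the matching candidate is optimal in its scenario. Since \textsc{BestResponseComputation} returns the maximum-utility candidate from this set, this suffices.

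First, I would split on the immunization choice $y^*$. By Lemma~\ref{lm:maxOneEdge}, I may assume without loss of generality that $x^*$ contains at most one edge into every $C \in \CU$, so the intersection of $x^*$ with edges into $\CU$ is characterized by a subset $\mathcal{A}^* \subseteq \CU$ of chosen components. If $y^* = 1$, then $\ap$ is immunized and its connections to $\CU$ do not alter the set $\mathcal{T}$ of targeted nodes; the correctness of \textsc{GreedySelect} gives that $\mathcal{A}_g$ is optimal, matching $\mathcal{A}^*$ up to equal utility. If $y^* = 0$, the optimal choice of $\mathcal{A}^*$ depends on whether the resulting enlarged vulnerable region of $\ap$ becomes a targeted region; by the correctness of \textsc{SubsetSelect}, the returned pair $\mathcal{A}_t, \mathcal{A}_v$ contains, respectively, an optimal $\CU$-selection conditional on $\ap$ becoming targeted and conditional on $\ap$ remaining untargeted, one of which matches the scenario of $s^*$.

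Next, I would observe that once $y^*$ and $\mathcal{A}^*$ are fixed, the target set $\mathcal{T}$ and the vulnerable region $\mathcal{R}_\mathcal{U}(\ap)$ are determined. Invoking \textsc{PossibleStrategy} with the corresponding inputs yields, by assumed correctness, a strategy of maximum utility conditional on this case: edges into $\CU$ are placed by Lemma~\ref{lm:maxOneEdge}, and edges into the components in $\CI$ are chosen optimally and independently per component, which is valid because, by Lemma~\ref{lm:onlyImmunized}, \textsc{PartnerSetSelect} only adds edges to immunized nodes and therefore does not perturb $\mathcal{T}$ or $\mathcal{R}_\mathcal{U}(\ap)$; Lemma~\ref{lm:dealIndependently} then justifies the per-component optimization. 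Consequently, at least one of $s_t, s_v, s_g$ attains $u_\ap(s^*)$ whenever $s^*$ commits any edge or immunization, while $s_\emptyset$ covers the degenerate best response of buying nothing.

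The final paragraph would simply combine these cases: since the set $\{s_\emptyset, s_t, s_v, s_g\}$ contains a strategy with utility at least $u_\ap(s^*)$, and the algorithm returns the maximizer over this set, the returned strategy is itself a best response. The main obstacle is verifying that the case analysis is exhaustive, in particular that \textsc{SubsetSelect}'s two outputs $\mathcal{A}_t, \mathcal{A}_v$ really do cover both sub-cases of $y^* = 0$ (namely whether the enlarged region containing $\ap$ is of maximum size among all vulnerable regions); this is a direct consequence of the specification of \textsc{SubsetSelect}, but it is the only place where the two separate invocations in the main algorithm must be reconciled with a single best-response strategy. Everything else is bookkeeping built on Lemma~\ref{lm:maxOneEdge} and Lemma~\ref{lm:dealIndependently}.
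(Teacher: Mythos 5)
Your overall strategy matches the paper's own proof: split on the immunization choice, use Lemma~\ref{lm:maxOneEdge} to reduce the $\CU$-part of any best response to a choice of components, rely on \textsc{GreedySelect} for the immunized case and on the two outputs of \textsc{SubsetSelect} for the vulnerable case, and use Lemma~\ref{lm:onlyImmunized} together with Lemma~\ref{lm:dealIndependently} to argue that \textsc{PossibleStrategy} completes each candidate optimally because its edges into components of $\CI$ go only to immunized nodes and hence leave $\mathcal{T}$ and $\mathcal{R}_\mathcal{U}(\ap)$ untouched. Up to that point the argument is sound and is essentially the paper's.

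The one genuine gap sits exactly where you flag ``the main obstacle'' and then wave it away: exhaustiveness of the $y^*=0$ split. You treat ``targeted'' versus ``untargeted'' as the only two sub-cases and claim $\mathcal{A}_t$ covers the targeted one as ``a direct consequence of the specification of \textsc{SubsetSelect}''. It is not: \textsc{SubsetSelect} by construction only considers adding at most $r = t_{max} - |\mathcal{R}_\mathcal{U}(\ap)|$ vulnerable nodes, so neither $\mathcal{A}_t$ nor $\mathcal{A}_v$ can match a hypothetical best response whose purchases make $\ap$'s vulnerable region \emph{strictly larger} than $t_{max}$ (in which case $\ap$ is also targeted, indeed with certainty). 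As written, your case analysis silently excludes these strategies without justification. The paper closes this with an explicit third case: if the new region exceeds $t_{max}$, then $\ap$ lies in the unique largest vulnerable region, is destroyed with probability one, and thus earns utility at most $-|x^*|\cdot\alpha \leq 0$, which is weakly dominated by $s_\emptyset$ --- a candidate the algorithm already compares against. Adding this one domination sentence (rather than appealing to \textsc{SubsetSelect}) makes your proof complete and coincident with the paper's.
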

\begin{proof}
	We assume that \textsc{PossibleStrategy} computes an optimal set of edges into components in $\CI$ and that all those edges connect to immunized nodes.  
	
	Firstly, observe that if $\ap$ decides to immunize, then the player cannot change $\mathcal{T}$ or $\mathcal{R}_{\mathcal{U}}(\ap)$ by buying edges.
	Moreover, in this case \textsc{GreedySelect} computes the optimal set of components in $\CU$ to connect with via a single edge.
	Thus, strategy $s_g$ (see Alg.~\ref{alg:bestResponseMain}) is indeed a best possible strategy for this case.
	
	The only changes to $\mathcal{T}$ or $\mathcal{R}_{\mathcal{U}}(\ap)$ can happen in case $\ap$ is vulnerable and buys edges to one or more vulnerable nodes.
	After $\ap$'s edge purchases we distinguish the following cases:
	
	\textbf{Case 1:} The new vulnerable region of $\ap$ is strictly smaller than $t_{max}$:
	The set of targeted nodes $\mathcal{T}$ does not change and $\ap$ remains un-targeted.
	By assumption \textsc{SubsetSelect} has chosen the optimal set of components $\mathcal{A}_v$ from $\CU$ and thus 
	strategy $s_v$ (see Alg.~\ref{alg:bestResponseMain}) is the best strategy in this case.
	
	\textbf{Case 2:} The new vulnerable region of $\ap$ has the size of exactly $t_{max}$:
	This increases the number of targeted nodes $|\mathcal{T}|$ by $t_{max}$ as $\ap$ becomes targeted.
	By assumption \textsc{SubsetSelect} has chosen the optimal set of components $\mathcal{A}_t$ from $\CU$ and thus
	strategy $s_t$ (see Alg.~\ref{alg:bestResponseMain}) is the best strategy in this case.
	
	\textbf{Case 3:} The new vulnerable region of $\ap$ is larger than $t_{max}$:
	In this case $\ap$ will be part of the single largest vulnerable region and thus killed.
	Clearly, in this case the only and best strategy for $\ap$ is to not buy any edges.
	
	Thus, \textsc{BestResponseComputation} covers all relevant cases and is therefore correct.
\end{proof}

\subsection{Subroutines Dealing with Components in $\CU$}
Components in $\CU$ are handled by the algorithms \textsc{SubsetSelect} and \textsc{GreedySelect}. The details of these algorithms are specified in this section.

\subsubsection{SubsetSelect: Interdependent Subset Selection of $\CU$}
\label{sec:2D-KP}

We assume, the best response of $\ap$ does not include immunization.
By Lemma~\ref{lm:maxOneEdge} we know that the problem of finding optimal edge endpoints in components in $\CU$ is equivalent to the problem of finding a subset of components in $\CU$ that is most profitable.
In particular, it cannot be beneficial for $\ap$ to buy edges to components from $\CU \cap \Cinc$, as she is already connected to those.
Thus, $\CU \setminus \Cinc = \{C_1, \ldots, C_m\}$ are the vulnerable components $\ap$ might buy an edge to.

The algorithm \textsc{SubsetSelect} computes the subset of the set of components $\{C_1, \ldots, C_m\}$ which later yields the highest utility, by solving an adjusted Knapsack problem with $|C_i|$ as profit and the total number of connected nodes and the total number of newly built edges as weight limits.

In the following, let $M$ be a $3$-dimensional table, where a cell $M[x,y,z]$ of the table stores the maximum number smaller or equal to $z$ of nodes that player $\ap$ can connect to, using only components from $\{C_1, \ldots, C_x\}$ and building at most $y$ edges in total.
As we only buy a maximum of one edge per component, the dimensions of $M$ are $n \times m \times n$.
The dynamic program therefore uses $\mathcal{O}(n^2m)$ space.

$M$ is computed as follows:
\begin{align*}
    M[0, \cdot, \cdot] &= M[\cdot, 0, \cdot] = M[\cdot, \cdot, 0] = 0,\\
    M[x,y,z] &= \begin{cases}
    M[x-1,y,z], &|C_x| > z,\\
    \max\bigg(|C_x| + M\big[x-1,y-1,z-|C_x|\big],\ M\big[x-1,y,z\big]\bigg), &|C_x| \leq z.
    \end{cases}
\end{align*}
Let $r = t_{max} - |\mathcal{R}_\mathcal{U}(v)|$ be the remaining number of vulnerable nodes $\ap$ may connect to without forming a unique targeted region.
Then, the algorithm picks two solutions $a_t, a_v$ from $M$ defined as:
\begin{align*}
a_t &= \max_{0 \leq j \leq m}\{ M[m,j,r] - j \cdot \alpha \}\\
a_v &= \max_{0 \leq j \leq m}\{ M[m,j,r-1] - j \cdot \alpha \}
\end{align*}
Let $\mathcal{A}_t, \mathcal{A}_v \subseteq \{C_1, \ldots, C_m \}$ be the corresponding subsets to the solutions $a_t, a_v$, which can be constructed by tracking the predecessor of a field in $M$.

Hence, $a_t$ is the maximum benefit we gain by connecting to $|\mathcal{A}_t| \leq r$ additional nodes, $a_v$ is the maximum benefit with at most $|\mathcal{A}_v| < r$ additional nodes.
Thus, by using $\mathcal{A}_t$ for the solution, $\ap$ might become part of a vulnerable region of maximum size and the best response algorithm has to decide later, whether it is beneficial to risk being attacked.
By taking $\mathcal{A}_v$, the vulnerable region of $\ap$ will remain strictly smaller than $t_{max}$ and thus $\ap$ won't be targeted by the adversary.

The algorithm has to store both solutions to decide later, which one is better, as the additional risk has influence on the total utility of the player.

If $a_t \leq 0$ and $a_v \leq 0$, then the best response of node $\ap$ does not include edges to components in $\CU$ under the condition that $\ap$ does not immunize.

We omit the straightforward correctness proof.

\subsubsection{GreedySelect: Greedy Subset Selection of $\CU$}
\label{sec:Greedy-Select}

If $\ap$ immunizes, she does not incur any risk by buying new edges into components from $\CU$.
Thus, by Lemma~\ref{lm:maxOneEdge} she can buy single edges to all components $C \in \CU \setminus \Cinc$ which contribute more to her expected profit than the edge cost $\alpha$.

Therefore the \textsc{GreedySelect} algorithm computes the set $\mathcal{A}_g$ which is the set of vulnerable components player $\ap$ buys an edge to if she immunizes:
\begin{align*}
    \mathcal{A}_g &= \left\{ C \in  \CU \setminus \Cinc \mid |C| \cdot p_{survive}(C) > \alpha \right\}\\
    p_{survive}(c) &= 1 - \frac{|C \cap \mathcal{T}|}{|\mathcal{T}|}.
\end{align*}
We omit the straightforward correctness proof.

\subsection{Partner Selection for Components in $\CI$}
\label{subsec:mt-Select}

\newcommand{\uv}{v_\mathcal{U}}
\newcommand{\iv}{v_\mathcal{I}}

\newcommand{\URs}{\mathcal{R}_\mathcal{U}^C}
\newcommand{\IRs}{\mathcal{R}_\mathcal{I}^C}

\newcommand{\UR}{R_\mathcal{U}}
\newcommand{\IR}{R_\mathcal{I}}

Let $C_1, \ldots, C_c \in \CI$ be the components $\ap$ might buy edges into.
By definition, each of those components contains at least one immunized node.
Lemma~\ref{lm:onlyImmunized} (see Section~\ref{sec:mt-properties}) ensures that we only need to consider buying edges to immunized nodes.

For computing an optimal partner set for a component $C\in \CI$, we consider the expected contribution of $C$ to $\ap$'s profit given that $\ap$ buys edges to all nodes in a set $\Delta$, and denote this profit by $\hat{u}_\ap(C \mid \Delta)$.
The details of $\hat{u}$ are defined in Section~\ref{subsubsec:partialProfitContribution}.

\subsubsection{PartnerSetSelect} \label{algo:partnersetselect}
For each component $C \in \CI$ we compute three candidate sets of players to buy edges to and finally select the candidate set that yields the highest profit contribution for the considered component $C$ for player $\ap$.
The three candidate sets for component $C$ are obtained as follows:

\textbf{Case 1:} The player considers buying no additional edges into component $C$.
	In this case the resulting player set is empty.
	
\textbf{Case 2:} The player considers buying exactly one additional edge into component $C$.
	The resulting player set contains the immunized partner that maximizes the profit for this component.
	
\textbf{Case 3:} The player considers buying at least two edges.
	In this case an optimal set of at least two immunized partners is obtained using the algorithm \textsc{MetaTreeSelect}.

As all possible cases are covered, the most profitable set of those three candidate solutions for component $C$ must be the optimal partner set for component $C$.
This optimal partner set is returned.
We refer to this subroutine as \textsc{PartnerSet\-Select}.

The first two cases, buying either no or exactly one edge into component $C$ are easily solved:
if no edge is purchased by $\ap$, then the expected profit contribution is $\hat{u}_\ap(C \mid \emptyset)$.
If exactly one edge is bought then the expected profit contribution is $\hat{u}_\ap(C \mid \{w\})$, where $w$ is the vertex in $C$ which maximizes $\ap$'s expected profit for component $C$.

Case 3 is much more difficult to handle.
It is the main point where we need to employ algorithmic techniques to avoid a combinatorial explosion.
To ease the strategy selection, for each component $C \in \CI$ we create an auxiliary graph to identify sets of nodes which offer equivalent benefits with respect to connection.
This graph is a bipartite tree which we call the \metatree of $C$.
Fig.~\ref{fig:sample-metas} shows a conversion of a graph component into its \metatree by merging adjacent nodes of the same type into regions and collapsing regions into blocks.
So called \bridgeblocks (orange) of the \metatree represent targeted regions of $C$ that would, if destroyed, decompose $C$ into at least two components.
If the adversary however chooses to attack a player in a so-called \candidateblock (blue or violet), $C$ would remain connected.
The construction procedure can be found in Section~\ref{sec:mt-construction} and useful properties of the \metatree are proven in Section~\ref{sec:mt-properties}.

\begin{figure}[htb]
	\centering
	\includegraphics[width=4cm]{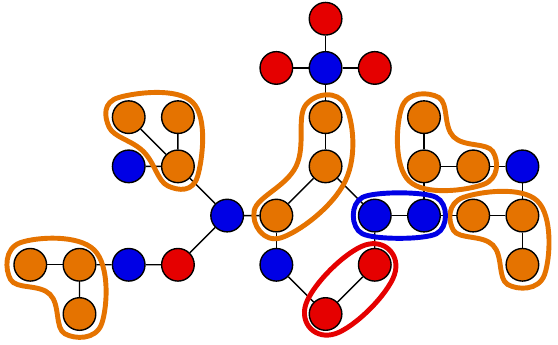}
	\hfill
	\includegraphics[width=4cm]{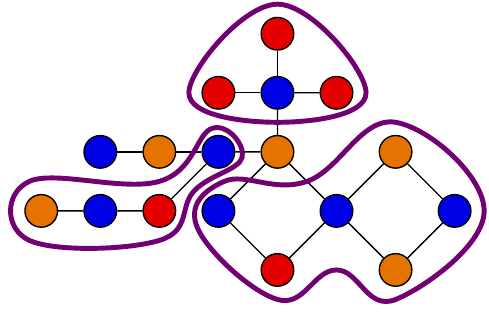}
	\hfill
	\includegraphics[width=2.7cm]{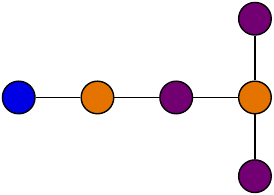}
	
	~
	
	\hfill
	\includegraphics{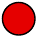} {\small vulnerable}
	\hfill
	\includegraphics{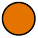} {\small targeted}
	\hfill
	\includegraphics{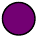} {\small mixed}
	\hfill
	\includegraphics{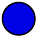} {\small immunized}
	\hfill
	\caption{A graph component (left), the corresponding \metagraph (middle), which is an intermediate step in the construction, and the obtained \metatree (right).}
	\label{fig:sample-metas}
\end{figure}

\noindent We now use the \metatree to identify the strategy maximizing the expected profit for $\ap$.

\subsubsection{\metatree Construction}
\label{sec:mt-construction}

We now describe the algorithm \textsc{MetaTreeConstruct} for constructing the \metatree of component $C$. 

To construct the \metatree, we first create the \metagraph of $C$ by merging maximally connected subcomponents of $C \cap \mathcal{U}$ and $C \cap \mathcal{I}$, respectively.
Thus, the resulting undirected \metagraph $G' = (V', E')$ is bipartite with $V' = \mathcal{R}_\mathcal{U}^C \cup \mathcal{R}_\mathcal{I}^C$, $\mathcal{R}_\mathcal{U}^C = \{R \in \mathcal{R}_\mathcal{U} \mid R \subseteq C \}$, $\mathcal{R}_\mathcal{I}^C = \{R \in \mathcal{R}_\mathcal{I} \mid R \subseteq C \}$ and $E' = \{ \{\UR,\IR\} \in \mathcal{R}_\mathcal{U}^C \times \mathcal{R}_\mathcal{I}^C \mid \exists v_\mathcal{U} \in \UR, v_\mathcal{I} \in \IR : \{v_\mathcal{U},v_\mathcal{I}\} \in E \}$.

To create the \metatree from the \metagraph, we identify vertices in $G'$ that disconnect the \metagraph (and thus, the original graph) if the adversary attacks any player inside the region represented by these vertices.
We name these vertices \bridgeblocks ($BB$).
Regions that stay connected no matter which player the adversary targets are collapsed into so called \candidateblocks ($CB$), to which the active player might buy edges.
Those blocks are constructed iteratively:

    \textbf{(1)} We create a new \candidateblock $CB_x$ by identifying any immunized region $\IR \in \IRs$ which is not yet part of a \candidateblock:
    $CB_x = \{\IR\}$ with $\IR \in \IRs$ and $\IR$ is not yet part of a block.
    
    \textbf{(2)} We add all immunized regions $R$ that are reachable via two paths originating from any $R' \in CB_x$, which do not share a targeted region:
    \begin{align*}CB_x &= CB_x \cup \{R \in \IRs \mid \exists R' \in CB_x ~\exists P = R' \rightarrow R,\\
    & Q = R' \rightarrow R : (P \cap Q) \cap \mathcal{R}_\mathcal{T} = \emptyset\},\end{align*}
    where $R' \to R$ denotes a simple path from vertex $R'$ to vertex $R$.
    Note that the paths can be identical, thus two immunized regions that are connected through a vulnerable but not targeted region are collapsed as well.
    
    \textbf{(3)} We add any nodes that only connect to nodes already in $CB_x$:
    $CB_x = CB_x \cup \{\UR \in \URs \mid \forall \{\UR, R\} \in E' : R \in CB_x\}.$
    This especially merges any non-targeted vulnerable regions into the respective adjacent \candidateblock.\footnote{Note that there is always exactly one adjacent \candidateblock.}

After we have constructed all \candidateblocks, i.e. when all immunized nodes are part of a \candidateblock, any remaining nodes become \bridgeblocks $BB_x$.

Again, if an edge existed between two nodes that are now in separate blocks, these blocks are connected in the \metatree $M = (V^*, E^*)$ with
$$V^* = \{ CB_1, ..., CB_i, BB_1, ..., BB_j \}$$
and
$$E^* = \{ \{BB_x, CB_y\} \mid \exists \{\UR, \IR\} \in E' : \UR \in BB_x \wedge \IR \in CB_y \}.$$
Since the underlying \metagraph is bipartite and no additional edges are introduced, the \metatree is also bipartite.

\begin{restatable}{lemma}{restateMtTree}
    The \metatree M is indeed a tree.
    \label{lm:mt-tree}
\end{restatable}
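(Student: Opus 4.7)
The plan is to verify the two defining properties of a tree separately: connectedness and acyclicity. Connectedness is straightforward. Since $C$ is a connected component by assumption, the \metagraph $G'$ obtained by contracting each maximally connected single-type subgraph of $C$ into a single vertex is connected as well. The \metatree $M$ is a further quotient of $G'$, where the vertex set $V'$ is partitioned into blocks and an edge between two blocks is placed whenever any edge of $E'$ crosses between them. Such a contraction cannot disconnect a connected graph, so $M$ is connected.

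For acyclicity I would argue by contradiction. Suppose $M$ contains a cycle. Since $M$ is bipartite (all edges go between \candidateblocks and \bridgeblocks), the cycle must have the form $CB_1 - BB_1 - CB_2 - BB_2 - \cdots - CB_\ell - BB_\ell - CB_1$ with $\ell \geq 2$. By construction, every \bridgeblock $BB_i$ is a single targeted region (all immunized regions end up in \candidateblocks, and any non-targeted vulnerable region is absorbed by step~(3)). Because $G'$ is bipartite with edges only between immunized and vulnerable regions, each incidence $\{BB_i, CB_j\}$ in $M$ corresponds to an edge in $G'$ from the targeted region $BB_i$ to some immunized region inside $CB_j$. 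Pick immunized regions $R_1 \in CB_1$ and $R_2 \in CB_2$ each adjacent to $BB_1$ in $G'$. Then $R_1 \to BB_1 \to R_2$ is one path in $G'$ from $R_1$ to $R_2$, whose only targeted region is $BB_1$. Walking the other way around the cycle should give a second route from $R_1$ to $R_2$ whose targeted regions are contained in $\{BB_2,\dots,BB_\ell\}$, and in particular avoid $BB_1$. If so, we have two paths from $R_1$ to $R_2$ in $G'$ that do not share any targeted region, so step~(2) of the \candidateblock construction would have merged $R_1$ and $R_2$ into the same block, contradicting $R_1\in CB_1 \neq CB_2 \ni R_2$.

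The main obstacle is turning the second ``route around the cycle'' into an honest simple path in $G'$ that avoids $BB_1$. This route passes through each intermediate $CB_i$ by entering it via an edge from $BB_{i-1}$ and leaving via an edge to $BB_i$, and a priori these two attachment regions in $CB_i$ might be far apart in $G'$ or only connected via detours through other targeted regions. The key tool is the very defining property of a \candidateblock furnished by step~(2) of \textsc{MetaTreeConstruct}: any two immunized regions in a common $CB_i$ are joined, through the seed of $CB_i$, by walks in $G'$ that collectively do not pin down a single targeted region, so for any prescribed targeted region (here $BB_1$) we can select within $CB_i$ a sub-walk between the two attachment regions that avoids it. Concatenating these sub-walks with the edges to the $BB_i$'s produces a walk from $R_1$ to $R_2$ that avoids $BB_1$, and shortcutting repeated vertices extracts an actual simple path, completing the contradiction. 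The argument for this patching step, which is where the acyclicity really lives, is the only nontrivial part; everything else reduces to unpacking the construction.
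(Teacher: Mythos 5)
Your proof follows essentially the same route as the paper's: connectedness because $M$ is a quotient of the connected component $C$, and acyclicity by contradiction, using bipartiteness to extract two distinct \candidateblocks on the cycle and then invoking the step-(2) merging criterion to force them to coincide. The only difference is that the paper simply asserts the existence of the two targeted-region-disjoint paths in $G'$, whereas you spell out the patching argument inside the intermediate \candidateblocks needed to actually produce them -- a worthwhile detail, but the same proof.
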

\begin{proof}
	Our algorithm starts on a graph which is connected and only collapses nodes.
	This way no node can get disconnected as edges are only removed when the nodes they connected are merged.
	M is therefore connected.
	
	We now use the fact that M is bipartite by construction to prove by contradiction that M contains no cycles.
	Assume there exists a cycle in the \metatree.
	As M is bipartite, the cycle must have an even length $>2$, as we allow no multi-edges.
	Thus the cycle contains at least two \candidateblocks $CB_1 \neq CB_2$.
	These two blocks contain two immunized regions $\IR \in CB_1, \IR' \in CB_2$.
	As they are arranged in a cycle, there exist two paths $P = \IR \rightarrow \IR', Q = \IR \rightarrow \IR'$ that do not share a targeted region, thus according to the construction $\IR$ and $\IR'$ must be part of the same \candidateblock.
	This contradicts $CB_1 \neq CB_2$.
\end{proof}

\begin{restatable}{lemma}{restateMtLeafIsCB}
    After the construction of the \metatree all leaves are \candidateblocks.
    \label{lm:mtLeafIsCB}
\end{restatable}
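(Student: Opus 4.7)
The plan is to assume for contradiction that some leaf of the constructed \metatree is a \bridgeblock $BB$, and derive a contradiction with step (3) of \textsc{MetaTreeConstruct}. Since the \metatree is bipartite between \candidateblocks and \bridgeblocks, the unique tree-neighbor of a hypothetical leaf $BB$ must be a \candidateblock, call it $CB$.

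First I would translate tree-adjacency back to \metagraph-adjacency: by definition of $E^*$, an edge $\{BB, CB\}$ in the \metatree exists exactly when some vertex of $BB$ is adjacent in $G'$ to some vertex of $CB$. Because $G'$ is bipartite with vulnerable regions on one side and immunized regions on the other, and $BB$ arises from a vulnerable region, every $G'$-neighbor of $BB$ is some immunized region $\IR \in \IRs$. I would then observe that every such $\IR$ must already lie in $CB$: if some neighbor $\IR$ of $BB$ were contained in a different \candidateblock $CB'$, then the corresponding $G'$-edge would induce an edge $\{BB, CB'\}$ in the \metatree, giving $BB$ a second tree-neighbor and contradicting the leaf assumption.

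The contradiction then comes from step (3) of the construction. The \candidateblocks partition the immunized regions, so in particular every immunized $G'$-neighbor of $BB$ is contained in $CB$. Now consider the iteration of the construction in which $CB$ is built. Its step (2) adds precisely the immunized regions that end up in $CB$; thus, at the moment step (3) of this iteration fires, every immunized neighbor of $BB$ is already inside $CB$. Step (3) then absorbs into $CB$ every vulnerable region whose entire $G'$-neighborhood lies in $CB$, and by what we just established $BB$ satisfies this condition. Hence $BB$ would have been merged into $CB$, contradicting the assumption that $BB$ survived as a separate \bridgeblock.

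The main delicate point I anticipate is making sure that step (2) really does place all of $BB$'s immunized neighbors into $CB$ before step (3) is executed, rather than letting one of them sit unassigned and later become part of a different \candidateblock $CB'$. This is handled by the observation above: the leaf property forces all of $BB$'s immunized neighbors into the single block $CB$, so whichever iteration builds $CB$ must have swept up all of them in step (2); once that is established, step (3) of the same iteration must absorb $BB$ and the contradiction closes.
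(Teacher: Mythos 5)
Your proposal is correct and follows essentially the same route as the paper's proof: assume a \bridgeblock leaf, use bipartiteness to see its unique neighbor is a \candidateblock containing all of its immunized neighbors, and conclude that step (3) of the construction would have merged it, a contradiction. You simply spell out in more detail the paper's phrase ``must by construction be merged into $CB$,'' which is a fine (and slightly more careful) elaboration of the same argument.
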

\begin{proof}
	Assume the contrary.
	Then there exists a \bridgeblock leaf $BB$.
	Because the \metatree is a bipartite tree and the underlying \metagraph contains at least one immunized region, the parent of this leaf has to be a \candidateblock $CB \neq BB$.
	Then $BB$ is only connected to exactly one \candidateblock and must by construction be merged into $CB$.
	Then however $BB$ would not be a leaf which is a contradiction, thus all leaves are \candidateblocks.
\end{proof}

\subsubsection{Key Properties of Using the \metatree}
\label{sec:mt-properties}
\begin{restatable}{lemma}{restateOnlyImmunized}
	\label{lm:onlyImmunized}
	Player $\ap$ has an optimal partner set for $C \in \CI$ which only buys edges to immunized players.
\end{restatable}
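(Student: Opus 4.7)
The plan is to start from any optimal partner set $\Delta^*$ for $C$ and perform a sequence of local swaps, each replacing an edge to a vulnerable endpoint by an edge to a suitably chosen immunized one, while showing that no swap decreases $\hat{u}_\ap(C \mid \cdot)$. Iterating over all originally vulnerable endpoints then yields an equally good partner set that buys only to immunized players.

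First I would fix any vulnerable $v \in \Delta^* \cap \mathcal{U}$ and let $R = \mathcal{R}_\mathcal{U}(v) \subseteq C$ be its vulnerable region. Since $C \in \CI$ is connected and contains at least one immunized node, the maximality of $R$ forces some node $u \in R$ to have a neighbor $v' \in C \setminus R$, and any such $v'$ must be immunized (otherwise $R$ would not be maximal). Define $\Delta' = (\Delta^* \setminus \{v\}) \cup \{v'\}$, so $|\Delta'| \le |\Delta^*|$, with equality unless $v' \in \Delta^*$ already (in which case the swap strictly saves an edge cost $\alpha$, which can only help).

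Next I would show that redirecting this one edge does not shrink the reachability term of $\hat{u}_\ap(C \mid \cdot)$. Because $\ap$'s immunization status and vulnerable region are untouched and only an edge landing inside $C$ is moved, the global target set $\mathcal{T}$ is unchanged, so it suffices to compare $|CC_\ap(t) \cap C|$ attack by attack. For a fixed $t \in \mathcal{T}$, let $R(t)$ be the attacked region. If $R(t) \neq R$, then both $v$ and $v'$ survive, and the path $v \to u \to v'$ lies entirely in $R \cup \{v'\}$ and therefore avoids $R(t)$; hence $v$ and $v'$ belong to the same post-attack component of $C$, so the subset of $C$ reachable from $\ap$ via $\Delta^*$ coincides with that via $\Delta'$. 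If instead $R(t) = R$, then $v$ is destroyed and its edge contributes nothing under $\Delta^*$, whereas under $\Delta'$ the immunized $v'$ still provides an entry into whatever survives of $C$, so the reachable set weakly grows. Averaging over $t \in \mathcal{T}$ gives $\hat{u}_\ap(C \mid \Delta') \ge \hat{u}_\ap(C \mid \Delta^*)$, and iterating the swap finishes the proof.

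The main obstacle I expect is that $\ap$ may own several edges into $C$ simultaneously, so $\ap$'s reachable subset of $C$ after an attack is the \emph{union} of the post-attack components of all surviving endpoints of $\Delta^*$, not a single component. What makes the swap safe is precisely the choice of $v'$ as a neighbor of $R$: this guarantees that $v$ and $v'$ lie in the same post-attack component of $C$ whenever $R$ is not the attacked region, so replacing $v$ by $v'$ preserves that component's contribution to the union and only enlarges the union when $R$ itself is attacked. An arbitrary immunized node in $C$ would not support this argument, which is why the existence of an immunized neighbor of $R$ (guaranteed by $C \in \CI$ together with maximality of $R$) is the key structural input.
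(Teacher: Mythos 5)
Your proposal is correct and follows essentially the same route as the paper's proof: an exchange argument that replaces each edge to a vulnerable node $v$ by an edge to an immunized node adjacent to $\mathcal{R}_\mathcal{U}(v)$, with the case analysis on whether the attacked region is $\mathcal{R}_\mathcal{U}(v)$ or not. Your write-up is somewhat more careful than the paper's (explicit handling of the union over multiple purchased edges, the possibility that the immunized neighbor is already in the set, and the iteration), but the underlying idea is identical.
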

\begin{proof}
    Consider a best response for component $C \in \CI$ for player $\ap$ which contains buying an edge to some vulnerable node $\uv \in C$.
    Since $C \in \CI$, at least one immunized node $\iv$ must be adjacent to $\uv$'s vulnerable region $\mathcal{R}_\mathcal{U}(\uv)$.

    On the one hand, if $\mathcal{R}_\mathcal{U}(\uv)$ is attacked, then player $\ap$ gets no benefit from buying the edge to $\uv$.
    Buying an edge to $\iv$ instead, would yield at least the same benefit.
    On the other hand, if any other component $C_w \in \mathcal{R}_\mathcal{T} \setminus \mathcal{R}_\mathcal{U}(\ap)$ is attacked, then the vertices $\uv$ and $\iv$ stay connected and thus an edge to $\iv$ provides the same benefit as an edge to $\uv$.

    Hence, each edge to some vulnerable player $\uv$ can be exchanged with an edge to an immunized player $\iv$ that is adjacent to $\mathcal{R}_\mathcal{U}(\uv)$ without decreasing the expected profit.
\end{proof}

\noindent We show that in case of buying at least two edges into $C\in \CI$, player $\ap$ only has to consider leaves of the \metatree of $C$ as endpoints.
Remember that all vertices of the \metatree are either \candidateblocks or \bridgeblocks and that by Lemma~\ref{lm:mtLeafIsCB}, all leaves of the \metatree are \candidateblocks.
Since \candidateblocks may contain many nodes, we first have to clarify what it means to buy an edge to a \candidateblock.
We show that if a player wants to buy an edge into a \candidateblock, then buying a single edge to an immunized node of the \candidateblock suffices.

\begin{restatable}{lemma}{restateOneEdgeToCB}
	It is never beneficial to buy more than one edge into a \candidateblock.
	\label{lm:oneEdgeToCB}
\end{restatable}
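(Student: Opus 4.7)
The plan is as follows. By Lemma~\ref{lm:onlyImmunized} I may restrict attention to strategies in which $\ap$ buys edges only to immunized nodes, so any edge entering a \candidateblock $CB$ lands in one of its immunized regions. Suppose $\ap$'s strategy contains $k\ge 2$ such edges into the same $CB$, incident to immunized vertices $w_1,\dots,w_k$. I will show that deleting $k-1$ of these edges leaves $|CC_\ap(t)\cap C|$ unchanged for every attack target $t\in\mathcal{T}$ while saving $(k-1)\alpha$ in edge cost, so the original strategy is strictly dominated and hence cannot be optimal. Note that because all $k$ edges go to immunized nodes, neither $\mathcal{T}$ nor $\mathcal{R}_\mathcal{U}(\ap)$ is affected by the deletion, so by Lemma~\ref{lm:dealIndependently} it is legitimate to argue component-wise on $C$ alone.

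The key technical step is the connectivity claim: \emph{after the destruction of any single targeted region $T^\ast\in\mathcal{R}_\mathcal{T}$, every pair of immunized vertices in $CB$ lies in the same connected component of the surviving graph.} I would prove this by induction on the order in which immunized regions are absorbed into $CB$ during the construction of the \metatree. Step~(1) provides the base case, where $CB$ holds a single immunized region and the claim is vacuous. For step~(2), when an immunized region $R$ is added, it is by construction connected to some already-absorbed $R'\in CB$ via two \metagraph paths $P,Q$ satisfying $(P\cap Q)\cap\mathcal{R}_\mathcal{T}=\emptyset$. Since a single attack destroys only the vertices of $T^\ast$, the region $T^\ast$ can appear on at most one of $P,Q$, leaving at least one path whose internal regions (immunized, non-targeted vulnerable, or targeted but distinct from $T^\ast$) all persist; lifting this surviving \metagraph path to the underlying graph yields a surviving path between a vertex of $R$ and a vertex of $R'$. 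Combined with the inductive hypothesis on $R'$, this connects $R$ to every previously absorbed immunized region.

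Granted the connectivity claim, after any attack the immunized vertices $w_1,\dots,w_k\in CB$ all lie in a single surviving component. Hence the edge $\{\ap,w_1\}$ alone reaches exactly the same surviving vertices of $C$ as the full bundle $\{\ap,w_1\},\dots,\{\ap,w_k\}$, so $|CC_\ap(t)\cap C|$ is identical for every $t\in\mathcal{T}$. The contribution $\hat u_\ap(C\mid\{w_1\})$ therefore exceeds $\hat u_\ap(C\mid\{w_1,\dots,w_k\})$ by exactly $(k-1)\alpha>0$, which contradicts optimality of the bundle and establishes the lemma.

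The main obstacle I foresee is making the transitivity in step~(2) airtight: the two-path guarantee is only between each newcomer and one predecessor, not between every pair simultaneously, so the induction must follow the absorption order carefully. A secondary subtlety is checking that a surviving \metagraph path genuinely translates into a surviving path in $G(\mathbf{s})$ — which reduces to the observation that immunized regions remain internally connected, non-targeted vulnerable regions are untouched, and targeted regions other than $T^\ast$ survive entirely, so every intermediate region of the surviving path still provides the edges needed to cross it.
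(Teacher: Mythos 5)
Your proof is correct and takes essentially the same route as the paper: reduce to immunized endpoints via Lemma~\ref{lm:onlyImmunized}, establish that after any single attack all immunized vertices of a \candidateblock lie in one surviving component, and conclude that any additional edge into the block adds no connectivity while costing $\alpha$. The only difference is that you prove the post-attack connectivity of the block explicitly, by induction on the absorption order in the \metatree construction, whereas the paper simply asserts it as ``directly resulting from properties of the \metatree.''
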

\begin{proof}
By Lemma~\ref{lm:onlyImmunized}, it suffices to buy ed\-ges to immunized nodes of the \candidateblock.
For any pair of immunized nodes $\iv, \iv'$ in the \candidateblock, one of the following statements, directly resulting from properties of the \metatree, will always be true:
\begin{enumerate}
	\item There exists a direct edge or a path only containing immunized nodes between $\iv$ and $\iv'$.
	\item For any targeted region $R$ in the \candidateblock there exists a path connecting $\iv$ and $\iv'$ which does not contain $R$.
\end{enumerate}
Thus, no matter which node will be attacked, each immunized node in the \candidateblock will always be connected to the whole \candidateblock except for the nodes that get killed by the attack.
It follows, that buying at most one edge per \candidateblock suffices.
\end{proof}

\noindent Now we are ready to show that it is optimal to buy edges towards \candidateblocks which are leaves of the \metatree.
\begin{restatable}{lemma}{restateMtConnectToLeaves}
	\label{lm:mt-connect-to-leaves}
	Let $M$ be the \metatree of component $C$.
	If player $\ap$ has an optimal partner set for component $C$ which contains buying at least two edges, then $\ap$ also has an optimal partner set for component $C$ which contains only leaves of $M$.
\end{restatable}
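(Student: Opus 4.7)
The approach is an exchange argument, showing that any optimal partner set $\Delta$ for $C$ with $|\Delta|\geq 2$ must already consist of edges to leaves of $M$. By Lemmas~\ref{lm:onlyImmunized} and~\ref{lm:oneEdgeToCB}, $\Delta$ can be described by a set $T\subseteq V^*$ of candidate blocks with exactly one immunized endpoint per block in $T$, so $|T|=|\Delta|\geq 2$; by Lemma~\ref{lm:mtLeafIsCB}, all leaves of $M$ are candidate blocks. Assume for contradiction that some $CB\in T$ is a non-leaf of $M$, and let $S_1,\dots,S_d$, $d\geq 2$, denote the subtrees of $M$ obtained by deleting $CB$.

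The key structural observation is that the benefit part $f(T):=\hat{u}_\ap(C\mid\Delta)+\alpha|T|$ depends on $T$ only through which components of $M$ are covered under each attack. An attack on a targeted region contained in a candidate block keeps $C$ connected (by the defining property of candidate blocks used in Lemma~\ref{lm:oneEdgeToCB}), so $\ap$ reaches every survivor as long as $T\neq\emptyset$. An attack on a bridge block $BB$ decomposes $M$ into the components of $M\setminus BB$, and $\ap$ reaches precisely those components that contain some element of $T$. Hence comparing $f(T)$ and $f(T')$ reduces to comparing, for every bridge block $BB$, which components of $M\setminus BB$ each set covers.

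\emph{Case A.} Some $S_i$ contains no element of $T\setminus\{CB\}$. Pick a leaf $CB^*$ of $M$ inside $S_i$ (which exists because $S_i$ extends beyond its root bridge block $BB_i$, since $\deg_M(BB_i)\geq 2$ by Lemma~\ref{lm:mtLeafIsCB}, and any leaf of $S_i$ other than $BB_i$ is also a leaf of $M$), and put $T':=(T\setminus\{CB\})\cup\{CB^*\}$. For a bridge block $BB$ off the $CB$--$CB^*$ path, $CB$ and $CB^*$ lie in the same component of $M\setminus BB$, so $T$ and $T'$ cover it identically. For $BB$ on the path, the component $P_1\ni CB$ contains all of $M\setminus S_i$ and hence all of $T\setminus\{CB\}$, while the other component $P_2\ni CB^*$ is disjoint from $T$ but covered by $T'$ via $CB^*$. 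The bipartite path between the distinct candidate blocks $CB$ and $CB^*$ always contains at least one bridge block, so this is a strict gain; combined with $|T'|=|T|$ it yields $u(T')>u(T)$, contradicting optimality of $T$.

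\emph{Case B.} Every $S_i$ contains an element of $T\setminus\{CB\}$, forcing $|T|\geq d+1\geq 3$. For any bridge block $BB$, which lies in a unique $S_j$, the component of $M\setminus BB$ containing $CB$ also contains every $S_k$ with $k\neq j$ and hence still holds an element of $T\setminus\{CB\}$; the other components are unaffected by deleting $CB$. Thus $f(T\setminus\{CB\})=f(T)$ while the edge cost drops by $\alpha$, so $u(T\setminus\{CB\})>u(T)$, again contradicting optimality. Either case contradicts the existence of a non-leaf in an optimal $T$, proving the lemma. The hardest part is the coverage bookkeeping in Case A for bridge blocks deep inside $S_i$ rather than only the root $BB_i$: such a cut must still leave every element of $T\setminus\{CB\}$ on the $CB$-side, which holds precisely because $CB^*$ was chosen in a $T$-free branch so that all of $T\setminus\{CB\}$ avoids $S_i$ entirely.
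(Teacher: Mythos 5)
Your proof is correct and follows essentially the same route as the paper's: reduce to one edge per \candidateblock via Lemma~\ref{lm:onlyImmunized} and Lemma~\ref{lm:oneEdgeToCB}, decompose $M$ at a non-leaf block $CB$ of the partner set into the subtrees hanging off it, and either drop the edge to $CB$ (when the remaining edges already reach $CB$'s component under every attack) or exchange it for a leaf in a subtree not yet served. Your case split (some subtree free of partners vs.\ none) is organized slightly differently from the paper's (edges into at least two subtrees vs.\ exactly one), and your bookkeeping of which components of $M$ minus an attacked \bridgeblock are covered is more explicit, but the underlying exchange moves are the same.
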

\begin{proof}
By Lemma~\ref{lm:onlyImmunized} w.l.o.g. $\ap$ only buys edges to \candidateblocks in $M$.
Assume the best response strategy $s_\aps^*$ of $\ap$ connects to $M$ with at least two edges with at least one of the endpoints not being a leaf of the \metatree.
Let $CB$ be such an inner block, root the \metatree at $CB$ and inspect its subtrees.

If $\ap$ bought edges to at least two different subtrees, one of those edges will be connected to $CB$ no matter where the adversary attacks.
Thus, by dropping the edge to $CB$ the player does not reduce connectivity whilst decreasing her expenses, which is a contradiction to the assumption that the edge to $CB$ is part of a best response.

If however $\ap$ only bought edges to one of the subtrees of $CB$, $\ap$ can exchange the edge to $CB$ with an edge to a leaf $CB'$ in another subtree
Due to Lemma~\ref{lm:mtLeafIsCB} this leaf is a \candidateblock as well.
Again, independent of the attack target of the adversary, $\ap$ is connected to $CB$ through one of the subtrees she bought an edge to, hence the exchange does not decrease her utility.

Hence, each edge to some inner block $CB$ can be exchanged with an edge to a leaf block $CB'$ without decreasing the expected profit.
\end{proof}

\subsubsection{Solving Case 3 of \textsc{PartnerSetSelect}}
\label{subsubsec:twoNewConnections}
In the following let $M$ be the \metatree of component $C$.
Moreover, we assume that $M$ has at least two \candidateblocks, since otherwise, by Lemma~\ref{lm:onlyImmunized} (see Section~\ref{sec:mt-properties}), buying at most one edge suffices.

\paragraph{Idea of the \textsc{MetaTreeSelect} Algorithm}
By Lemma~\ref{lm:oneEdgeToCB} and Lemma~\ref{lm:mt-connect-to-leaves} (see Section~\ref{sec:mt-properties}) we only have to consider to buy single edges into leaves of the \metatree which are \candidateblocks.
Thus, we only have to find the optimal combination of leaves of $M$ to which to buy an edge.

Probing edge purchases to all possible combinations of leaves of $M$ and comparing the expected profit contributions yields a correct solution but entails exponentially many computations.
We reduce this exponential complexity by using the following observations to obtain the best possible combination of leaves.
Both observations are based on the assumption that player $\ap$ buys an edge to some leaf $r$ of $M$ and we consider the tree $M$ rooted at $r$.
Later we ensure this assumption by rooting $M$ at each possible leaf.

	\textbf{Observation 1:} Consider any vertex $w$ of $M$.
	If player $\ap$ has an edge to $w$, then it can be decided efficiently whether it is beneficial to buy exactly one or no edge into a subtree of $w$, as the influence of any additional edges into $M$ does not propagate over $w$, effectively making the decisions to buy edges independent for subtrees.
	
	\textbf{Observation 2:} Let $w$ be any vertex of $M$ and let the children of $w$ in $M$ be $x_1,\dots,x_\ell$.
	Consider that $\ap$ has an edge to $w$ and it has already been decided for each subtree rooted at $x_1,\dots,x_\ell$ whether or not to buy an edge into that subtree. If there exists at least one edge between $\ap$ and any of those subtrees, then it cannot be beneficial to buy additional edges into the subtree rooted at $w$. 
	Either $w$ is destroyed, and the previous edge-buy decisions apply to the disconnected subtrees, or $w$ survives, and $\ap$ is connected to all subtrees via node $w$.

These observations provide us with the foundation for a dynamic programming algorithm which decides bottom-up whether it is beneficial to buy at most one edge into a given subtree by reusing the edge buy decisions of its subtrees.

Note that the algorithm never has to compare combinations of bought edges, as the only decision to make is, whether or not to buy exactly one edge into a subtree in combination with iteratively shifting the presumed edge to the parent node of the leaves to the root $r$.

\paragraph{The \textsc{MetaTreeSelect} Algorithm} A detailed description of the \textsc{MetaTreeSelect} algorithm can be found in Algorithm~\ref{alg:metatree-select}.
\begin{center}
\begin{minipage}{15cm}
\begin{algorithm}[H]
    \KwIn{\metatree $M$ for component $C$}
    \KwOut{Player $\ap$'s optimal partner set for $C$ consisting of at least two partners}

    \ForEach{leaf $r$ of $M$} {
        $M(r) \leftarrow$ root $M$ at vertex $r$;
        
        $w \leftarrow r$'s only child in $M(r)$;

        $opt(r) \leftarrow \{\textnormal{some immunized node in }r\} ~\cup$ \textsc{RootedMetaTreeSelect}($M(r),w$);

    }

    $best \leftarrow opt(r)$ which maximizes $\hat{u}_\ap(C \mid opt(r))$;

    \If{$|best| \geq 2$} {
        \Return{$best$};
    } \Else {
        \Return{$\emptyset$};
    }

    \caption{\textsc{MetaTreeSelect}($M$)}
    \label{alg:metatree-select}
\end{algorithm}
\end{minipage}
\end{center}
The algorithm \textsc{MetatreeSelect} roots $M$ at every leaf and assumes buying an edge towards some immunized node within the root \candidateblock.
Then the subroutine \textsc{RootedMetaTreeSelect} (see Algorithm~\ref{alg:rooted-metatree-select}) gets the rooted \metatree $M(r)$ and some vertex $r_T$ (which initially is the only child of the currently considered root leaf) as input and recursively computes the expected profit contribution of one additional edge from $\ap$ to a block in the subtree $T$ rooted at $r_T$ under the assumption that $\ap$ is already connected to the parent block $p(r_T)$ of $r_T$ in $M(r)$.
\begin{center}
\begin{minipage}{14cm}
\begin{algorithm}[H]
    \KwIn{rooted \metatree $M(r)$, vertex $r_T$ of $M(r)$}
    \KwOut{Set of nodes from $T$ to buy an edge to}

    $opt(r_T) \leftarrow \emptyset$;

    \ForEach{child $w$ of $r_T$} {
        $opt(r_T) \leftarrow opt(r_T)~ \cup$ \textsc{RootedMetaTreeSelect}($M(r),w$);
    }

    \If{$r_T$ is a \bridgeblock \textbf{or} $opt(r_T) \neq \emptyset$ \textbf{or} a player in $T$ bought an edge to $\ap$} {
        \Return{$opt(r_T)$};
    }

    \ForEach{leaf $l$ of $T$} {
        $\profit(l) \leftarrow$ additional profit of $\ap$ with edge to $l$;
    }
    $best \leftarrow l$ which maximizes $\profit(l)$;

    \If{$\profit(best) > \alpha$} {
        $opt(r_T) \leftarrow opt(r_T)~ \cup \{\textnormal{some immunized node in }best\}$;
    }

    \Return{$opt(r_T)$};

    \caption{\textsc{RootedMetaTreeSelect}($M(r),r_T$)}
    \label{alg:rooted-metatree-select}
\end{algorithm}
\end{minipage}
\end{center}
Let $|T|$ denote the number of players represented by the union of all blocks in $T$ and $opt(r_T)$ will be the set of blocks in $T$ the algorithm decided to buy an edge to.

\noindent After processing all subtrees of $r_T$ (Alg.~\ref{alg:rooted-metatree-select} lines 2-3), the algorithm distinguishes three cases (Alg.~\ref{alg:rooted-metatree-select} line 4):

	\textbf{Case 1:} $r_T$ is a \bridgeblock.
	Then, as $M$ is bipartite, $p(r_T)$ must be a \candidateblock.
	As the algorithm assumes the existence of an edge from $\ap$ to $p(r_T)$, there also exists a path from $\ap$ to $r_T$ via $p(r_T)$ in all attack scenarios. Thus, no additional edge is needed (Alg.\ref{alg:rooted-metatree-select} line 5).
	
	\textbf{Case 2:} There exists an edge between $\ap$ and some node $x$ in $T$, either through an edge $\ap$ buys according to the results of the recursive invocations, or through a preexisting edge bought by player $x$.
	Then, depending on the attack target, there either exists a path from $\ap$ to $r_T$ via $x$ or via $p(r_T)$. Hence, no additional edge is needed (Alg.\ref{alg:rooted-metatree-select} line 5).
	
	\textbf{Case 3:} Player $\ap$ can get disconnected from $r_T$ by an attack on $p(r_T)$.
	Then the algorithm considers each leaf $l$ of $T$ as possible partner (Alg.\ref{alg:rooted-metatree-select} line 6), computes the profit contribution of an edge to $l$ (Alg.\ref{alg:rooted-metatree-select} line 7) and selects a leaf that maximizes this profit contribution (Alg.\ref{alg:rooted-metatree-select} line 8).

	The additional profit of an edge to $l$ is computed as follows: An edge to $l$ only yields profit, if a \bridgeblock $t$ is attacked which either belongs  to $T$ or $t=p(r_T)$, and $l$ is located in a subtree of $t$.
	In this case, the profit contribution equals the size of this subtree.
	Therefore let $\profit(l \mid t)$ be the additional profit an edge to $l$ contributes to the utility of $\ap$ in case $t$ is attacked and let $\mathcal{B}$ be the set of all \bridgeblocks in $T$.
	Thus
	$$\profit(l) = \frac{|p(r_T)|}{|\mathcal{T}|}|T| + \sum_{t \in \mathcal{B}}\frac{|t|}{|\mathcal{T}|} \profit(l \mid t),$$ with
	$$\profit(l \mid t) =
		\begin{cases}
			0 & l \text{ is not in any subtree of } t\\
			|Y| & Y \text{ is a subtree of } t \text{ and } l \text{ is in } Y.
		\end{cases}$$
	Finally, If the additional profit of the best possible leaf exceeds the edge costs, $l$ is added to the set of partners of $\ap$ (Alg.\ref{alg:rooted-metatree-select} line 10).

\subsubsection{Correctness of MetaTreeSelect and RootedMetaTreeSelect}\label{sec:mts-correctness}

\noindent An important detail for proving the correctness of \textsc{MetaTreeSelect} is the definition of $profit(l)$ from Case $3$ of \textsc{RootedMetaTreeSelect}.
There, the existence of an edge to $p(r_T)$ is assumed and the expected profit is computed by considering both the case where $p(r_T)$ is destroyed and the case where some other node is attacked.
Thus, while considering an assumed edge to some parent node, it is already taken into account that this node may be attacked.

Also note that if an edge to some node of the \metatree is bought then this means that an edge to some immunized node of the corresponding \candidateblock is bought.

The algorithm \textsc{MetaTreeSelect} computes an optimal partner set for component $C$ containing at least two nodes, if it exists.
In this case, by Lemma~\ref{lm:mt-connect-to-leaves}, there is an optimal partner set for $C$ containing only nodes within leaves of $M$.
Assume that the \metatree $M$ of component $C$ is rooted at such a leaf $r$.
Since the algorithm \textsc{MetaTreeSelect} compares all possibilities to root $M$ at a leaf we will assume in the following that some node $r^*$ in block $r$ is contained in an optimal partner set.

Fig.~\ref{fig:mt2Induc} shows the structure of the \metatree in the following proofs.
\begin{figure}[!h]
	\centering
	\includegraphics[width=6cm]{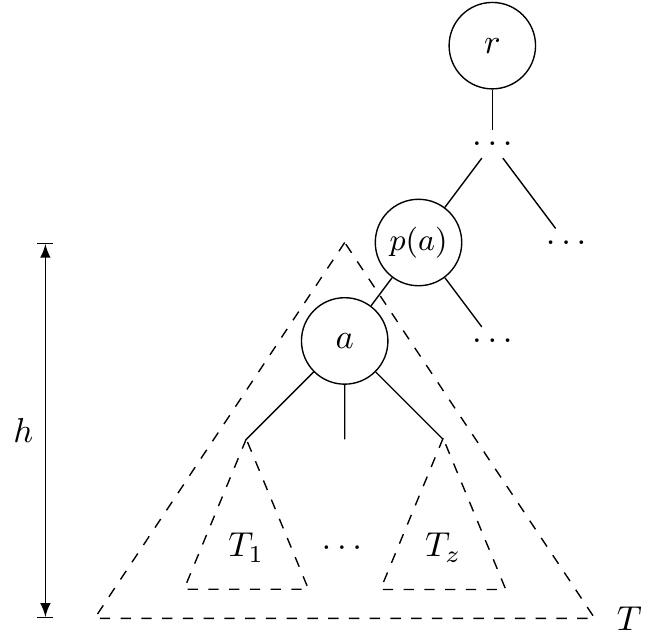}
	\caption{Inductive step of \textsc{RootedMetaTreeSelect}.}
	\label{fig:mt2Induc}
\end{figure}

\begin{lemma} \label{lm:noadditionaledges}
	Consider any subtree $T$ with height $h$ of $M$ and let $a$ be its root.
	If \textsc{RootedMeta\-Tree\-Select} has processed all subtrees $T_1, T_2, \ldots$ of $a$ under the assumption of an edge from $\ap$ to $a$ and at least one of those subtrees contains a node which was added to the partner set, then it cannot be profitable to buy additional edges into the complete tree $T$ under the assumption of the existence of an edge to $p(a)$.
\end{lemma}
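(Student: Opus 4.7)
The plan is to prove the lemma by induction on the height $h$ of $T$. The base case $h=0$ is vacuous: a single-node subtree has no proper subtrees, so the hypothesis fails. For the inductive step, let $T$ be rooted at $a$ with child subtrees $T_1,\dots,T_k$, let $X_i \subseteq T_i$ denote the partner set returned by the recursive call on $x_i$, and set $X = \bigcup_i X_i \neq \emptyset$. It suffices to show that for any non-empty collection $Y$ of additional edges into $T$, the marginal expected reachability of $\ap$ gained from $Y$ is at most $|Y|\cdot\alpha$. By Lemmas~\ref{lm:onlyImmunized}, \ref{lm:oneEdgeToCB} and~\ref{lm:mt-connect-to-leaves}, I may restrict attention to edges whose endpoints are immunized nodes in distinct candidate-block leaves of $T$, and by submodularity of reachability it is enough to bound the marginal contribution of each such edge separately.

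I would then split these candidate additional edges into three categories based on their target: (A) an edge to an immunized node in the root block $a$; (B) an edge to a leaf in a subtree $T_i$ with $X_i \neq \emptyset$; (C) an edge to a leaf in a subtree $T_j$ with $X_j = \emptyset$. For category (A), since the edge goes to an immunized node, $a$ must be a candidate block, and in every attack scenario $\ap$ already reaches $a$: either $p(a)$ survives and the path $\ap \to p(a) \to a$ is intact, or $p(a)$ is destroyed (an attack on a bridge parent), in which case some partner $v \in X_i$ reaches $a$ along the intact path $v \to \dots \to x_i \to a$. Hence edges of type (A) contribute no marginal reachability. For category (B), I invoke the inductive hypothesis on $T_i$: its height is strictly smaller, and the partner set $X_i \neq \emptyset$ returned by the recursive call on $x_i$ satisfies the lemma's premise for $T_i$. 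Combined with a monotonicity argument --- additional assumed edges and partners in $T \setminus T_i$ can only reduce the marginal reachability of a new edge in $T_i$, since they furnish alternative routes --- this shows that no additional edge into $T_i$ is beneficial at the current level.

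Category (C) is the main technical case. For a leaf $l \in T_j$ with $X_j = \emptyset$, I claim that the marginal expected reachability of an edge to $l$ equals $\profit(l)$ as computed by \textsc{RootedMetaTreeSelect} at the shallowest Case~3 invocation that considered $l$ --- namely the call on the lowest candidate ancestor $c$ of $l$ in $T_j$, which exists because the recursive call on $x_j$ returned empty. The reason for this equality is that an edge to $l$ contributes marginal reachability only in attacks that disconnect $l$ from the $p(a) \to a \to x_j \to \cdots$ route and from every $X$-partner simultaneously. Using the bipartite alternation of candidate and bridge blocks together with $X_j = \emptyset$, those attacks are precisely the ones accounted for in $\profit(l)$: the attack on the bridge parent $p(c)$ contributes $|p(c)|/|\mathcal{T}| \cdot |T_c|$, and each attack on a bridge block strictly inside $T_c$ that is an ancestor of $l$ contributes the size of $l$'s downward component. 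Since $l$ was not added at that invocation, the best candidate there had $\profit \leq \alpha$, hence $\profit(l) \leq \alpha$. Summing over $Y$ with submodularity bounds the total marginal reachability by $|Y| \cdot \alpha$, so the net change in expected utility is non-positive. The main obstacle is the case analysis in category (C), where the marginal-equals-profit identity must be verified attack-by-attack while tracking $\ap$'s reachable set through the bipartite meta-tree.
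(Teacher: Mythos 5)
Your induction on the height of $T$ follows the same skeleton as the paper's proof, but you organize the inductive step differently: the paper cases on \emph{where the adversary attacks} (outside $T$ versus inside $T$), argues that $\ap$ already reaches $a$ via either $p(a)$ or the existing partner, and compresses the remaining scenario --- the root $a$ itself being destroyed --- into the single sentence that this event ``was already considered in a previous recursive call''; you instead case on \emph{where an additional edge would go} (the root block, a subtree that received a partner, a subtree that did not) and bound marginal contributions, using submodularity of expected reachability to reduce to single edges. Your categories (A) and (B) are sound (though the ``monotonicity'' phrasing in (B) is loose --- the edge sets being compared are not nested; what actually holds is that the marginal contribution of a new edge into $T_i$ is the \emph{same} under your hypotheses as under the recursive call's assumption of an edge to $a$), and your category (C) is exactly the content the paper hides in that one sentence, so your route is, if anything, the more explicit one. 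However, (C) is also where the write-up stops short of a proof: the ``marginal reachability equals $\profit(l)$'' identity is asserted rather than verified, and as stated it contains two inaccuracies. First, ``the lowest candidate ancestor $c$ of $l$'' contradicts ``the shallowest Case~3 invocation that considered $l$''; the correct reference point is the \emph{highest} \candidateblock ancestor of $l$ inside $T_j$ at which Case~3 actually fires, since only that invocation's $\profit(l)$ accounts for the attack on the \bridgeblock directly above it. Second, such an invocation need not exist merely because the call on $x_j$ returned $\emptyset$: an empty return can also be produced by the Case~1/Case~2 early exits, e.g.\ when some player in $T_j$ already bought an edge to $\ap$, and in that situation the identity must be replaced by the observation that the pre-existing edge already supplies the connectivity an edge to $l$ would purchase. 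Both issues are repairable by the attack-by-attack check through the alternating \candidateblock/\bridgeblock structure that you yourself flag as the remaining obstacle, and the paper's own proof offers no more detail at this point --- so this is a correct plan with the crux still owed rather than a wrong approach.
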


\begin{proof}
	We prove the statement by induction over the height $h$.
	Assume there exists at least one edge between $\ap$ and some child $a'$ of $a$ and assume the existence of an edge to $p(a)$, the parent of $a$.
	Thus, there exists a path from $\ap$ to $a$ via $p(a)$.
	
	For $h=1$ all children of $a$ are leaves.
	The event of $a$ being destroyed was already considered in a previous recursive call.
	If some other node is attacked, then all surviving nodes in $T$ remain connected and $\ap$ has a path via some node in $a'$ to all of them.
	
	Assume now, that the hypothesis holds for some arbitrary fixed height $h_0 \geq 1$.
	For $h = h_0 + 1$ we distinguish two different attack targets of the adversary.
	
	If the attack takes place outside $T$, then $T$ remains connected and analogous to above $\ap$ is connected to all survivors via $a'$.
	
	If the adversary attacks a player inside $T$, then $\ap$ is connected to $a$ via $p(a)$ which is equivalent to $\ap$ having a direct edge to $a$ and thus, by induction hypothesis, no additional edge should be bought into any subtree of $a$.
\end{proof}
The next statements establish that we can treat all subtrees of a node of $M$ independently, that buying one edge into a subtree suffices if the edge-buy decision was negative for all subtrees one level below and that \textsc{RootedMetaTreeSelect} is correct.
\begin{lemma} \label{lm_treeindepence}
	Assume the existence of an edge from $\ap$ to some node $a$ of $M$ and let $V_1, \dots, V_z$ denote the node sets of $a$'s subtrees.
	
	If there are optimal partner sets $O_1,\dots,O_z$ for $C$ which all contain $r^*$, then there is an optimal partner set $O^*$ for $C$ containing $r^*$ where $O^* \cap V_j = O_j \cap V_j$ for all $1\leq j \leq z$.
\end{lemma}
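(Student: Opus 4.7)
My plan is to prove the independence by decomposing the profit contribution $\hat{u}_\ap(C \mid \Delta)$ along the subtree structure around $a$, then stitching the $O_j$'s together. Fix any edge set $\Delta$ that contains an edge to some (immunized) node of block $a$; write $\Delta_a := \Delta \cap a$ and $\Delta_j := \Delta \cap V_j$, so $|\Delta| = |\Delta_a| + \sum_j |\Delta_j|$ and the edge-cost term splits accordingly.

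First I would show that for every attack target $t \in \mathcal{T}$,
\[
  |CC_\ap(t) \cap C| \;=\; |CC_\ap(t) \cap a| \;+\; \sum_{j=1}^{z} |CC_\ap(t) \cap V_j|,
\]
and, more importantly, that the term $|CC_\ap(t) \cap V_j|$ depends only on $t$, on $\Delta_a$ (which we hold fixed), and on $\Delta_j$ — but not on any $\Delta_{j'}$ with $j' \neq j$. This is the crucial structural step and relies on the tree property of $M$ (Lemma~\ref{lm:mt-tree}): any path in $C$ from $\ap$ to a node $u$ in $V_j$ either uses an edge of $\Delta_j$, or enters $V_j$ through the unique metatree edge from $a$ to the root of $V_j$. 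In the latter case the path must first reach block $a$; since $\Delta_a$ contains an edge to an immunized (hence surviving) node of $a$ and, by Lemma~\ref{lm:oneEdgeToCB}/the candidate-block property, all surviving nodes of $a$ stay mutually connected under any single attack, $\ap$ reaches $a$ via $\Delta_a$ whenever $a$ is not entirely destroyed. Conversely, if the attack wipes out $\ap$'s endpoint in $a$, then $V_j$ is unreachable from $\ap$ except through $\Delta_j$. In either case, surviving edges in $\Delta_{j'}$ contribute nothing to the connectivity of $V_j$, because they can only carry a path to $V_j$ by passing through $a$, which is already the decisive bottleneck.

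Combining the two displays with the definition of $\hat{u}$ and linearity of expectation over $t$, I obtain a clean decomposition
\[
  \hat{u}_\ap(C \mid \Delta) \;=\; f_a(\Delta_a) \;+\; \sum_{j=1}^{z} f_j(\Delta_j),
\]
where $f_a$ depends only on $\Delta_a$ and each $f_j$ depends only on $\Delta_j$ (given that $\Delta_a$ is nonempty, which is our standing hypothesis). Applying this decomposition to each $O_i$ and using optimality of $O_i$ forces $O_i \cap V_j$ to be a maximizer of $f_j$ for every pair $i,j$ — otherwise replacing the $V_j$-part of $O_i$ with a better choice would strictly increase the profit.

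Finally I would define
\[
  O^* \;:=\; \bigl(O_1 \setminus \textstyle\bigcup_j V_j\bigr) \;\cup\; \bigcup_{j=1}^{z} \bigl(O_j \cap V_j\bigr).
\]
By construction $O^* \cap V_j = O_j \cap V_j$, and plugging $O^*$ into the decomposition shows $\hat{u}_\ap(C \mid O^*) = f_a(O_1 \cap a) + \sum_j f_j(O_j \cap V_j)$, which is optimal since every summand is individually maximal. For the $r^*$ requirement: $r^*$ lies either in $a$ — in which case it is retained through the $O_1 \setminus \bigcup_j V_j$ part, because $r^* \in O_1$ — or in some $V_j$, in which case $r^* \in O_j \cap V_j \subseteq O^*$. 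The main obstacle is the structural claim that the $\Delta_{j'}$'s truly do not influence the $V_j$-connectivity; I expect this to require a short case analysis on whether the attacked region lies inside $a$, inside $V_j$, or elsewhere, leaning on the bipartite tree shape of $M$ and on the fact that $\ap$'s endpoint in $a$ is immunized.
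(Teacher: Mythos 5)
Your proof is correct and follows essentially the same route as the paper's: the paper's own argument is just the terse observation that every path from a subtree of $a$ to the rest of $C$ must traverse $a$, so with the assumed edge to $a$ the choices inside distinct subtrees cannot influence one another, and your additive decomposition of $\hat{u}_\ap(C \mid \cdot)$ together with the stitching construction of $O^*$ is the spelled-out version of exactly that. One minor slip: when $a$ is not the root, your first display needs an additional term for the part of $C$ on the $p(a)$ side (the blocks above $a$, including the root block containing $r^*$), which enters the decomposition as one more summand independent of the $\Delta_j$'s; your definition of $O^*$ already preserves $O_1$'s choices there via $O_1 \setminus \bigcup_j V_j$, so the conclusion is unaffected.
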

\begin{proof}
	The choice of nodes to connect to within any subtree $T$ of $a$ does not influence $\ap$'s connectivity to nodes outside of $T$ as every path from some node in $T$ to some node outside $T$ traverses node $a$.
	Thus, the choice of nodes in $T$ for all optimal partner sets for $C$ is independent from the choice of nodes in other subtrees.
\end{proof}
\begin{lemma}
	\label{cl:OneEdgeIsEnough}
	Consider any subtree $T$ with root $a$ of $M$.
	If \textsc{RootedMetaTreeSelect} has correctly decided that no edge should be bought into any subtree of $a$, then every optimal partner set for component $C$ which contains $r^*$ has at most one node from within $T$.
\end{lemma}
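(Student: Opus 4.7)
I plan to proceed by contradiction, exploiting the correctness of \textsc{RootedMetaTreeSelect} at lower levels together with Lemma~\ref{lm_treeindepence}. Suppose that some optimal partner set $O^*$ with $r^* \in O^*$ has $|X| \geq 2$, where $X := O^* \cap T$. First, by Lemma~\ref{lm:onlyImmunized}, Lemma~\ref{lm:oneEdgeToCB}, and Lemma~\ref{lm:mt-connect-to-leaves}, I may assume that every node of $O^*$ is immunized, that $O^*$ contains at most one node per candidate block, and that every node of $O^*$ lies in a leaf of $M$. Since $|X| \geq 2$ forces $T$ to contain at least two leaves, its root $a$ is not a leaf of $M$, so every node of $X$ lies in a leaf of a proper subtree $T_w$ rooted at some child $w$ of $a$.

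Next, I translate the hypothesis. The correctness of the recursive decisions ``no edge should be bought into any subtree of $a$'' gives, for each child $w$ of $a$, that the optimal partner set restricted to $T_w$ under the assumption of an edge from $\ap$ to a node in $a$ is empty. Combined with Lemma~\ref{lm_treeindepence} applied at $a$, the optimal partner set restricted to the union of all proper subtrees of $a$ (given that edge) is also empty. Equivalently, for any subset $Y$ of immunized nodes lying in those subtrees, the marginal expected profit of adding $Y$ on top of a fixed edge from $\ap$ to $a$ is at most $|Y|\alpha$.

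To derive the contradiction, I construct $O' := (O^* \setminus X) \cup \{b\}$, where $b$ is an immunized node in the candidate block of $M$ closest to $a$: pick $b \in a$ if $a$ is a candidate block, and otherwise $a$ is a bridge block and $b$ is taken in the candidate block on the path from $a$ towards the root $r$, which plays the same structural role for the connectivity argument. The cost saving of $O'$ over $O^*$ is $(|X|-1)\alpha \geq \alpha$. Combining the inequality of the previous paragraph (applied to $X$) with the observation that the edge from $\ap$ to $b$ supplies the same connectivity to $T$ as any element of $X$ in every attack scenario in which $a$ survives, I aim to bound the marginal profit of $X$ above $\{b\}$ by $(|X|-1)\alpha$ rather than $|X|\alpha$. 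This yields $u(O') > u(O^*)$ strictly and contradicts optimality.

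The main obstacle is precisely the tightening from $|X|\alpha$ to $(|X|-1)\alpha$: a routine application of the hypothesis gives only the weaker bound, while strict improvement demands that at least one of the $|X|$ edges of $X$ contribute no marginal profit on top of $\{b\}$. This is the genuine content of the lemma, and I plan to handle it via a careful case split on the adversary's attack target, using the bipartite structure of $M$ so that a candidate block $a$ is never attacked, together with an auxiliary argument at $p(a)$ covering the case where $a$ itself is a bridge block.
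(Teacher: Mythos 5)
Your proposal does not close the lemma: the step you yourself flag as ``the genuine content of the lemma''---tightening the marginal-profit bound from $|X|\alpha$ to $(|X|-1)\alpha$---is exactly where the argument stops. The hypothesis you extract (adding any set $Y$ of nodes from the subtrees of $a$ on top of an assumed edge to $a$ gains at most $|Y|\alpha$) only yields $u(O^*)-u(O')\leq \alpha$, which is consistent with $O^*$ being optimal, so no contradiction follows. You announce a ``careful case split on the adversary's attack target'' to recover the missing $\alpha$, but no such split is carried out, and it is not clear one exists in the form you describe: when $a$ is a \bridgeblock and the adversary attacks the targeted region $a$ represents, your replacement node $b$ lies in $p(a)$, outside $T$, and the single edge to $b$ then provides \emph{no} connectivity to any of the surviving subtrees of $a$, whereas the $|X|\geq 2$ edges of $X$ do. So the claim that ``the edge from $\ap$ to $b$ supplies the same connectivity to $T$ as any element of $X$ in every attack scenario in which $a$ survives'' does not cover the scenario that actually matters, and the exchange $O^*\mapsto O'$ may genuinely lose connectivity there. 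As written, the proposal is a correct reduction of the lemma to an unproved inequality, not a proof.

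For comparison, the paper proves the statement by induction on the height $h$ of $T$, which sidesteps the per-edge accounting you are attempting. The base case $h=0$ is immediate from Lemma~\ref{lm:mtLeafIsCB} and Lemma~\ref{lm:oneEdgeToCB}. For the inductive step one assumes the edge to $p(a)$ and splits on the attack location: if the attack is outside $T$, then $T$ stays connected and a single edge into $T$ already reaches every survivor, so further edges are pure cost; if the attack is inside $T$, then $\ap$ reaches $a$ via $p(a)$, which is equivalent to having a direct edge to $a$---precisely the assumption under which \textsc{RootedMetaTreeSelect} correctly decided that no edge into any subtree of $a$ is profitable. This conditional decomposition over attack targets is the mechanism that replaces your missing $(|X|-1)\alpha$ bound; if you want to salvage your exchange argument, you would need to perform the same conditioning and, in effect, re-derive the induction.
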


\begin{proof}
	By induction over height $h$ of $T$.
	For $h=0$ the block $a$ must be a leaf of the \metatree.
	By Lemma~\ref{lm:mtLeafIsCB} and Lemma~\ref{lm:oneEdgeToCB} the statement follows.
	
	Assume now, that the hypothesis holds for some height $h_0\geq 0$. For $h = h_0 + 1$ we assume the existence of an edge between $\ap$ and $p(a)$ and distinguish two different attack targets of the adversary:
	
	If the attack takes place outside of $T$, then $T$ remains connected and thus at most one edge into $T$ suffices.
	
	If a node within $T$ is attacked, then $\ap$ is connected to $a$ via $p(a)$ which is equivalent to $\ap$ having a direct edge to $a$, which is the  assumption under which \textsc{RootedMetaTreeSelect} has correctly decided not to buy edges into any subtree of $a$.
\end{proof}
\begin{lemma} \label{lm:rootedmetatreecorrect}
	If $\ap$ has an edge to $p(r_T)$ and $opt(r_T)$ is returned by \textsc{RootedMe\-ta\-Tree\-Select($M(r),r_T$)}, then there exists an optimal partner set for component $C$ which contains $r^*$ and $opt(r_T)$.
\end{lemma}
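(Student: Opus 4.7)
The plan is to prove the statement by strong induction on the height $h$ of the subtree $T$ rooted at $r_T$. The inductive claim to be established is slightly stronger than the statement as written: under the assumption that $\ap$ has an edge to $p(r_T)$, the returned set $opt(r_T) \subseteq T$ can be extended (by the partner choices outside $T$, including $r^*$) to an optimal partner set for $C$. Because by Lemma~\ref{lm_treeindepence} choices in disjoint subtrees are independent once we assume connectivity to their roots' parents, it is enough to verify optimality of $opt(r_T)$ within $T$ given the connection to $p(r_T)$; combining with the choice at the outer recursion level then yields the global statement of the lemma.

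First I would handle the base case $h=0$. Then $r_T$ is a leaf, hence by Lemma~\ref{lm:mtLeafIsCB} a \candidateblock, and the for-loop over children does nothing, so $opt(r_T)=\emptyset$ after line~3. If either the bridgeblock condition or the incoming-edge condition in line~4 applies, we return $\emptyset$, and I would argue this is indeed optimal: for a leaf \candidateblock, Lemma~\ref{lm:oneEdgeToCB} says at most one edge is ever useful, and if $\ap$ is already connected to a node in the block (or reaches it via $p(r_T)$ whenever $p(r_T)$ survives) then no further edge helps. Otherwise the algorithm examines the single leaf $r_T$ itself, computes $\profit(r_T)$, and buys an edge iff $\profit(r_T) > \alpha$; the correctness then reduces to showing that $\profit(r_T)$ exactly equals the expected gain of the edge, which I verify by expanding the definition of $\hat{u}_\ap(C \mid \cdot)$ and summing over attack targets inside $T$ and at $p(r_T)$.

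For the inductive step, let $x_1,\dots,x_k$ be the children of $r_T$ with subtrees $T_1,\dots,T_k$. By the induction hypothesis, each $opt(x_i)$ is an optimal choice within $T_i$ assuming an edge to $p(x_i)=r_T$, and by Lemma~\ref{lm_treeindepence} these choices may be combined. I would then run through the three early-return conditions in line~4. If $r_T$ is a \bridgeblock, then $p(r_T)$ is a \candidateblock, so as long as $p(r_T)$ is not attacked $\ap$ reaches $r_T$ through $p(r_T)$; the event of $p(r_T)$ being attacked is already accounted for by the recursive decisions on each $x_i$, so no edge into the $r_T$-\candidateblock itself helps. If $opt(r_T)\neq\emptyset$ at this point, some subtree has already secured an edge, and Lemma~\ref{lm:noadditionaledges} rules out additional edges into $T$. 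If a player in $T$ bought an edge to $\ap$, the same argument applies with that preexisting edge playing the role of the purchased one. In the remaining (third) case, Lemma~\ref{cl:OneEdgeIsEnough} says that at most one further edge into $T$ can be useful; by Lemma~\ref{lm:mt-connect-to-leaves} this edge may be assumed to land in a leaf of $T$, reducing the problem to selecting a single best leaf, which is exactly what the algorithm does in lines~6--10.

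The main obstacle I expect is the bookkeeping behind $\profit(l)$: one must show that this expression equals the true marginal expected profit, over the attacker's distribution, of adding a single edge to a leaf $l$ of $T$ given the assumption that $\ap$ is connected to $p(r_T)$. I would verify this by partitioning the attack targets into (i) targets outside $T\cup\{p(r_T)\}$, which contribute nothing to the marginal gain because $\ap$ already reaches $l$ via $p(r_T)$; (ii) the attack on $p(r_T)$, which disconnects the entire $T$ from $\ap$, contributing $|p(r_T)|\cdot|T|/|\mathcal{T}|$ to the marginal gain; and (iii) attacks on each \bridgeblock $t \in \mathcal{B}$ inside $T$, whose contribution $\profit(l\mid t)$ is exactly the size of the $t$-subtree containing $l$ (and zero otherwise, since the edge to $l$ does not help recover parts of $T$ disconnected from $l$ after $t$'s destruction). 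Subtracting $\alpha$ for the edge cost then shows that the threshold $\profit(l)>\alpha$ in line~9 is exactly the correct buy-or-not decision, completing the induction.
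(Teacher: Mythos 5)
Your proof is correct and follows essentially the same route as the paper: the same case analysis over the early-return conditions in line~4, supported by the same lemmas (Lemmas~\ref{lm:mtLeafIsCB}, \ref{lm:oneEdgeToCB}, \ref{lm_treeindepence}, \ref{lm:noadditionaledges}, \ref{cl:OneEdgeIsEnough} and \ref{lm:mt-connect-to-leaves}), with the induction on subtree height made explicit rather than delegated to those lemmas. Your added verification that $\profit(l)$ equals the true marginal expected gain is a welcome detail the paper only sketches in the surrounding discussion.
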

\begin{proof}
	\textsc{RootedMetaTreeSelect} recurses on all subtrees of $r_T$.
	
	If $r_T$ is a leaf of $M(r)$, then the algorithm computes the expected profit contribution of buying a single edge into the leaf and if this exceeds $\alpha$ an immunized node within this leaf is eventually added to $opt(r_T)$.
	Since, by Lemma~\ref{lm:mtLeafIsCB}, leaves are \candidateblocks, buying at most one edge is correct and the algorithm compares both possible options. Otherwise, if $r_T$ is not a leaf, then there are three cases:
	
	\textbf{Case 1:} If $r_T$ is a \bridgeblock, then $p(r_T)$ must be a \candidateblock and by assumption $\ap$ is connected to $r_T$ via $p(r_T)$.
	Thus, by Lemma~\ref{lm_treeindepence}, there is an optimal partner set for $C$ which contains $r^*$ and the union of all the returned node sets obtained by recursing on all children of $r_T$.
	
	\textbf{Case 2:} If there exists an edge between $\ap$ and any node in the subtree rooted at $r_T$.
	Then, by Lemma~\ref{lm:noadditionaledges} and Lemma~\ref{lm_treeindepence}, there is an optimal partner set for $C$ which contains $r^*$ and the union of all the returned node sets obtained by recursing on all children of $r_T$.
	
	\textbf{Case 3:} If $opt(r_T) = \emptyset$ after all children of $r_T$ were processed.
	Then, by Lemma~\ref{cl:OneEdgeIsEnough}, it is optimal to consider buying at most one edge into the subtree rooted at $r_T$.
	\textsc{RootedMetaTree\-Select} tries all possibilities and selects the most profitable one.
\end{proof}
\noindent Finally, we show that \textsc{MetaTreeSelect} is correct.
\begin{restatable}{theorem}{restateMTCorrectTheorem}\label{th:MTCorrect}
	If there is an optimal partner set with at least two nodes for component $C$, then \textsc{MetaTreeSelect} algorithm outputs such a set.
\end{restatable}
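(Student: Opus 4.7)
The plan is to reduce the claim to Lemma~\ref{lm:rootedmetatreecorrect} by arguing that the outer loop of \textsc{MetaTreeSelect} tries the ``correct'' rooting. Fix any optimal partner set $O^*$ for $C$ with $|O^*| \geq 2$. By Lemma~\ref{lm:mt-connect-to-leaves} we may assume that every element of $O^*$ lies inside a leaf block of $M$, and by Lemma~\ref{lm:onlyImmunized} that every element of $O^*$ is immunized. Pick any $r^* \in O^*$ and let $r$ denote the leaf block of $M$ containing $r^*$.

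Next I would examine the iteration of the outer loop at leaf $r$. The algorithm roots $M$ at $r$, inserts some immunized node of block $r$ into $opt(r)$ (which by Lemma~\ref{lm:oneEdgeToCB} offers the same connectivity benefit as $r^*$ itself), and invokes \textsc{RootedMetaTreeSelect}$(M(r),w)$ on the unique child $w$ of $r$. This call satisfies the precondition of Lemma~\ref{lm:rootedmetatreecorrect}, namely that $\ap$ has an edge to $p(w) = r$. Applying the lemma yields an optimal partner set for $C$ containing (the surrogate of) $r^*$ together with the returned set $opt(w)$; because $r$ is a root leaf, the subtree processed is the entire tree, so this optimal partner set equals $opt(r) = \{r^*\} \cup opt(w)$. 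In particular $\hat{u}_\ap(C \mid opt(r)) = \hat{u}_\ap(C \mid O^*)$.

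Since the outer loop picks $best$ to maximize $\hat{u}_\ap(C \mid opt(r))$ over all leaf rootings, $best$ attains at least this optimum and hence is itself an optimal partner set. It remains to argue $|best| \geq 2$, so that the algorithm returns $best$ rather than $\emptyset$: choosing $O^*$ to be \emph{minimally} optimal (no proper subset of $O^*$ attains the same profit), each of its members strictly increases the profit contribution, and the strict threshold $\profit(best) > \alpha$ inside Case~3 of \textsc{RootedMetaTreeSelect} then forces at least one further partner to be added during the recursion, giving $|opt(r)| \geq 2$. I expect the main obstacle to be the careful invocation of Lemma~\ref{lm:rootedmetatreecorrect} at the top level — i.e.\ verifying that the algorithm's insertion of a generic immunized node from $r$ faithfully instantiates the lemma's ``edge to $p(r_T)$'' hypothesis through Lemma~\ref{lm:oneEdgeToCB} — together with the tie-breaking analysis underlying the size-$\geq 2$ conclusion.
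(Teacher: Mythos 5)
Your proof follows essentially the same route as the paper: restrict attention to leaf blocks via Lemma~\ref{lm:mt-connect-to-leaves} (and Lemma~\ref{lm:onlyImmunized}), consider the iteration of the outer loop that roots $M$ at a leaf containing a member of an optimal partner set, and invoke Lemma~\ref{lm:rootedmetatreecorrect} to conclude that the set $opt(r)$ built in that iteration is optimal, so the maximizing choice $best$ is as well. Your extra argument for $|best|\geq 2$ actually goes beyond the paper's proof, which silently skips that check; it is fine as long as some \emph{minimal} optimal set has size at least two (i.e.\ the additional edges are strictly profitable, not ties at exactly $\alpha$), a corner case the paper's own proof likewise ignores.
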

\begin{proof}
	Assume that there exists an optimal partner set with at least two nodes for component $C$ and assume that the \metatree $M$ of component $C$ is rooted at some leaf $r$.
	Since the algorithm compares all possibilities to root $M$ at a leaf and by Lemma~\ref{lm:mt-connect-to-leaves}, at least one of those leaves must be contained in an optimal partner set.
	Assume that $r$ is indeed such a leaf.
	
	Thus, by buying $r$ we satisfy the assumption needed for \textsc{RootedMetaTreeSelect}.
	By Lemma~\ref{lm:rootedmetatreecorrect}, \textsc{RootedMetaTreeSelect} returns a set of nodes, which together with $r^*$ yields an optimal partner set for $C$.
	Hence, the algorithm \textsc{MetaTreeSelect} is correct.
\end{proof}

\subsection{Cost Analysis}\label{sec:costanalysis}
We now prove that our algorithm is efficient.
\begin{theorem}\label{thm:cost}
    \textsc{BestResponseComputation} runs in time $\mathcal{O}(n^4 + k^5)$, where $n$ is the number of nodes in the network and $k$ is the number of blocks in the largest occurring \metatree. Only in terms of $n$, the run time is in $\mathcal{O}(n^5)$.
\end{theorem}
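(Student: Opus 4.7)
The plan is to analyze each subroutine of \textsc{BestResponseComputation} in isolation and then aggregate the bounds.

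First I would dispose of the cheap preprocessing: building $G(\mathbf{s'})$ together with the partitions $\CU$, $\CI$ and the region structure takes $\mathcal{O}(n^2)$ via standard BFS. \textsc{SubsetSelect} is bounded by $\mathcal{O}(n^3)$, because its table has dimensions $n\times m\times n$ with $m\le n$, each cell is filled in $\mathcal{O}(1)$ using the recurrence, and extracting $\mathcal{A}_t,\mathcal{A}_v$ by tracking predecessors needs only $\mathcal{O}(nm)$ further work. \textsc{GreedySelect} evaluates $|C|\cdot p_{survive}(C)$ for each $C\in\CU$ in $\mathcal{O}(n)$, giving $\mathcal{O}(n^2)$ total.

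Next I would tackle the three invocations of \textsc{PossibleStrategy}. In each, committing the edges from $M$ and recomputing $\mathcal{R}_\mathcal{I},\mathcal{R}_\mathcal{U},\mathcal{R}_\mathcal{T}$ is $\mathcal{O}(n^2)$. The dominant cost is the loop over $C\in\CI$ invoking \textsc{PartnerSetSelect}. To amortize the expected-profit computations for Cases~1 and~2, I would precompute, once per invocation of \textsc{PossibleStrategy}, the connected component of $\ap$ in $G(\mathbf{s'})\setminus\mathcal{R}_\mathcal{T}(t)$ for every potential attack target $t\in\mathcal{T}$; this costs $\mathcal{O}(n\cdot n^2)=\mathcal{O}(n^3)$. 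Given this table, evaluating $\hat u_\ap(C\mid\{w\})$ for a fixed candidate $w$ reduces to summing $|\mathcal{T}|$ precomputed contributions, i.e.\ $\mathcal{O}(n)$, so iterating over all candidates in every component yields $\mathcal{O}(n^3)$ per component and $\mathcal{O}(n^4)$ summed over $\CI$ and the three outer calls.

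The delicate part is Case~3. \textsc{MetaTreeConstruct} can be implemented in $\mathcal{O}(|C|^2)$ by first forming the \metagraph of $C$ via region labelling and then iteratively collapsing blocks with standard connectivity operations; summed over components this contributes $\mathcal{O}(n^3)$. For \textsc{MetaTreeSelect} on a metatree with $k$ blocks I would argue as follows. The outer loop of Algorithm~\ref{alg:metatree-select} tries each of the $\mathcal{O}(k)$ leaves as a root, which is forced by Lemma~\ref{lm:mt-connect-to-leaves}. For each rooted tree, \textsc{RootedMetaTreeSelect} visits every block exactly once in a post-order fashion. When it reaches Case~3 at some block $r_T$, it iterates over all $\mathcal{O}(k)$ leaves $l$ of the current subtree $T$ and computes $\profit(l)$, which aggregates a contribution over the $\mathcal{O}(k)$ bridge blocks of $T$, and for each bridge block $t$ requires locating which of $t$'s subtrees (if any) contains $l$ at a cost of $\mathcal{O}(k)$ along the tree path. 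Multiplying these five nested factors gives $\mathcal{O}(k^5)$ per metatree. Since blocks are disjoint across components, $\sum_C k_C\le n$, so the largest metatree dominates and the aggregated metatree work is $\mathcal{O}(k^5)$.

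Adding the contributions yields the advertised $\mathcal{O}(n^4+k^5)$, and the bound $k\le n$ collapses it to $\mathcal{O}(n^5)$ in terms of $n$ alone. The main obstacle I expect is the tight analysis of \textsc{MetaTreeSelect}: I need to argue that no cheaper reuse across the $\mathcal{O}(k)$ root choices is guaranteed, while simultaneously showing that each single rooted pass costs at most $\mathcal{O}(k^4)$ by charging the $\profit(l)$ evaluations carefully. A secondary subtlety will be justifying that the Case~1/Case~2 preprocessing really amortizes to $\mathcal{O}(n^4)$ rather than inflating to $\mathcal{O}(n^5)$, which requires reusing the per-target surviving-component table across all candidate endpoints and across the three calls to \textsc{PossibleStrategy}.
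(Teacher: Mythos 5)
Your proposal is correct and follows essentially the same route as the paper: a subroutine-by-subroutine accounting with $\mathcal{O}(n^3)$ for \textsc{SubsetSelect}, polynomial per-component costs for \textsc{PartnerSetSelect} Cases~1--2 and \textsc{MetaTreeConstruct} that aggregate to $\mathcal{O}(n^4)$, and an $\mathcal{O}(k^5)$ bound for \textsc{MetaTreeSelect} obtained by multiplying the $\mathcal{O}(k)$ root choices with the per-root profit evaluations, exactly as in the paper's proof of Theorem~\ref{thm:cost}. The only real deviation is your precomputation/amortization for the Case~1/Case~2 profit evaluations, which the paper shows is unnecessary: the naive $\mathcal{O}(p_j^4)$ per component already sums to $\mathcal{O}(n^4)$ because $\sum_j p_j \leq n$.
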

\begin{proof}[of Theorem~\ref{thm:cost}]
    We start with analyzing the five core subroutines of our algorithm:
    \textsc{SubsetSelect}, \textsc{Greedy\-Select}, \textsc{PartnerSetSelect}, \textsc{MetaTreeConstruct} and \textsc{MetaTreeSelect}.
    
    \textsc{SubsetSelect}: This subroutine calculates a three dimensional matrix where every dimension has a maximum length of $n$.
    The calculations needed for one matrix entry can be done in constant time.
    This results in a time complexity of $\mathcal{O}(n^3)$.
    
    \textsc{GreedySelect}: Every single component of $\CU$ is tested individually if an edge into it is beneficial.
    For the calculation of the expected profit contribution only the size of the component is needed. Thus, this task can be done for all components in $\mathcal{O}(n)$.
    
    \textsc{PartnerSetSelect}: This subroutine is executed for every component $C\in \CI$. Let $p$ denote the number of nodes of any processed component $C\in \CI$ and let $q$ denote its number of edges. The algorithm covers three cases:
   
    Case 1: No edge is bought into $C$:
    Here an attack is simulated on every targeted node $y$ in component $C$ and the remaining profit is calculated.
    Calculating the profit after an attack on node $y$ can be done in linear time $\mathcal{O}(p+q)$ using breadth first search (BFS). Since there are $\mathcal{O}(p)$ many targeted nodes in $C$ the overall cost is in $\mathcal{O}(p(p+q))$. In the worst case this is in $\mathcal{O}(p^3)$.

    Case 2: One edge is bought into $C$:
    The profit of every combination of buying an edge to node $x$ in $C$ and having node $y$ in $C$ attacked is calculated.
    This amounts to a total cost of $\mathcal{O}(p^2(p+q))$, which is in $\mathcal{O}(p^4)$.

    Case 3: Two or more edges are bought into $C$:
    In this case \textsc{MetaTreeConstruct} and \textsc{MetaTreeSelect} are executed.
    \begin{itemize}
    \item \textsc{MetaTreeConstruct}: 
    To construct the \metatree we first create the \metagraph of $C$.
    To do so we use BFS from an uncollapsed node to find all adjacent nodes of the same types.
    All of these nodes are then merged and disregarded in further iterations of BFS. There may be $\mathcal{O}(p)$ calls to BFS and (assuming a representation as adjacency list) merging neighboring nodes of the same type can be done in $\mathcal{O}(p+q)$.
    Thus the construction of the \metagraph happens in $\mathcal{O}(p(p+q))$ which is in $\mathcal{O}(p^3)$.

    The construction of the \metatree consists of two steps:
    Finding cycles to collapse and collapsing the nodes on the cycle.
    The first part can be done using BFS and in every iteration of BFS at least one node is collapsed.
    Thus, the overall cost for all BFS computations is in $\mathcal{O}(p(p+q))$.
    The second part can be done by checking the edges of all neighboring nodes.
    This can be done in $\mathcal{O}(p+q)$ time per merge. 
    Hence we have a total cost of $\mathcal{O}(p(p+q))$ which is in $\mathcal{O}(p^3)$ for the construction of the \metatree.
    
    \item \textsc{MetaTreeSelect}: Let $k$ denote the number of nodes of the created \metatree.
    The \textsc{MetaTreeSelect} algorithm calculates the expected profit contribution for connecting with the currently considered component via two or more edges.
    It roots the \metatree at every leaf, and then traverses it bottom up.
    During this traversal the following calculation is made at every node:
    The number of reachable nodes from a node $x$ given an attack on node $y$ is evaluated for every possible node pair $x,y$ of the \metatree.
    As seen above this cannot exceed a run time of $\mathcal{O}(k^3)$.
    As this is done for every node for every possible root, the total cost is in $\mathcal{O}(k^5)$.
    \end{itemize}
    Thus Case 3 has total cost in $\mathcal{O}(p^3+k^5)$. It follows that the total cost of \textsc{PartnerSetSelect} can be upper bounded by $\mathcal{O}(p^4+k^5)$. 
    
    \textsc{Possible\-Strategy:} First, an arbitrary node for every previously selected vulnerable component is chosen. This can be done in constant time per component. Adding the corresponding edges and updating immunized, vulnerable and targeted regions can be done via at most $n$ BFS computations and thus in $\mathcal{O}(n^3)$. Next, \textsc{PartnerSetSelect} is called for each of the $c$ components in $\CI$. Let $p_j$ and $k_j$ denote the number of nodes and the size of the \metatree for component $C_j \in CI$ for $1\leq j\leq c$. 
    Therefore, the cost of \textsc{PossibleStrategy} can be upper bounded by $\mathcal{O}(n^3 + \sum_{1\leq j \leq c}(p_j^4 + k_j^5))$ which is in $\mathcal{O}(n^4 + k^5)$.    
    
    \textsc{BestResponseComputation:} The total cost for \textsc{BestResponseComputation} thus is in $\mathcal{O}(n^3 + n + n^3 + n^4 + k^5)= \mathcal{O}(n^4 + k^5)$. Since $k<n$ this yields cost in $\mathcal{O}(n^5)$.
\end{proof}


\subsection{Empirical Results}
\label{sec:empirical}

To evaluate our proposed best response algorithm, we apply it to randomly generated networks.
To allow a direct comparison with the extensive experiments by Goyal et al.~\cite{GJKKM15arxiv,GJKKM15}, we chose the same setup, i.e. the initial networks were generated via the Erd\H{o}s-Renyi model with average degree $5$ and $\alpha=\beta=2$. 

In Fig.~\ref{fig:emp-plot0}~(left) we show the number of rounds required until the best response dynamic arrives at a Nash equilibrium averaged over 100 experiments per configuration.
\begin{figure}[htb]
  \centering
  \includegraphics[width=5.0cm]{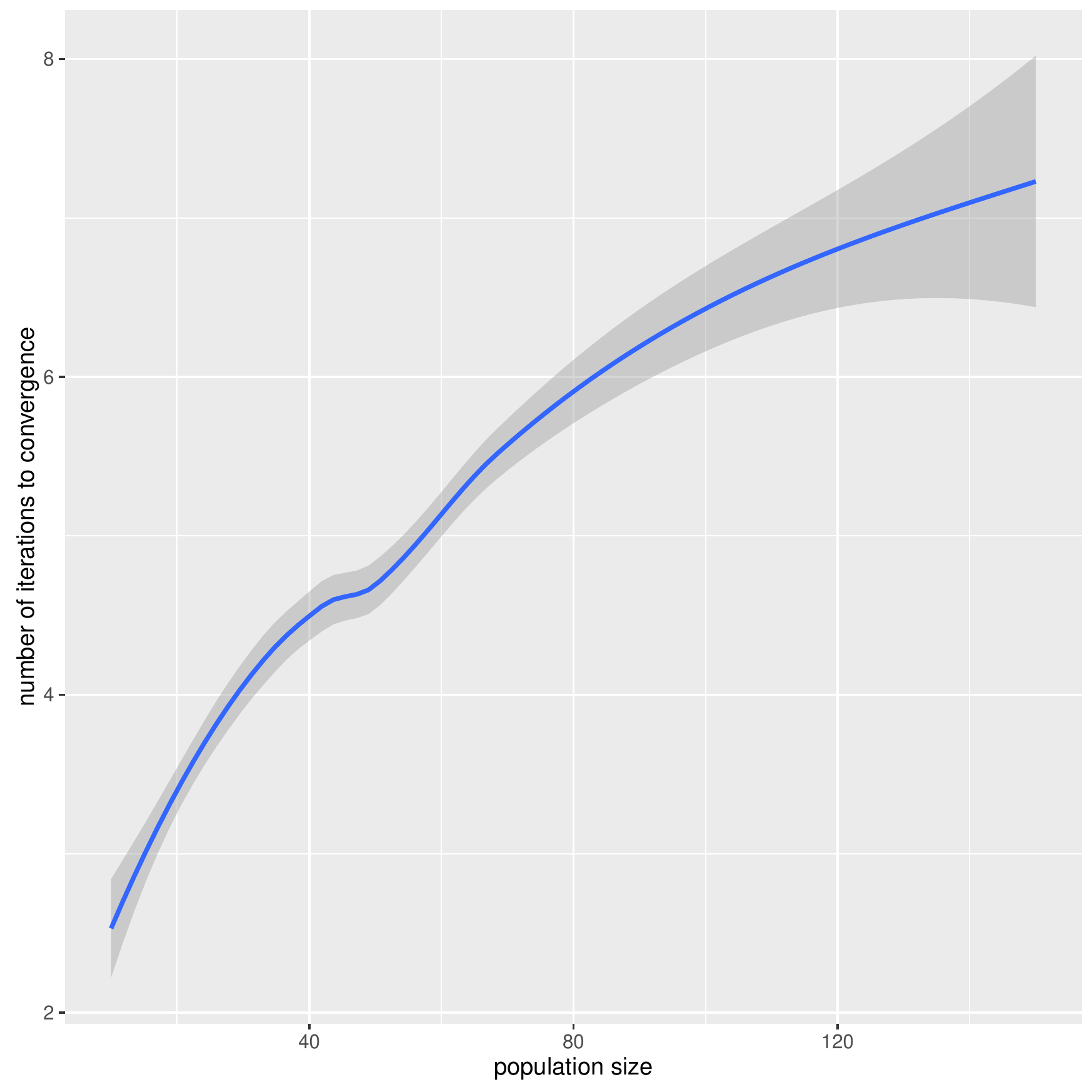}
  \hfill
  \includegraphics[width=5.0cm]{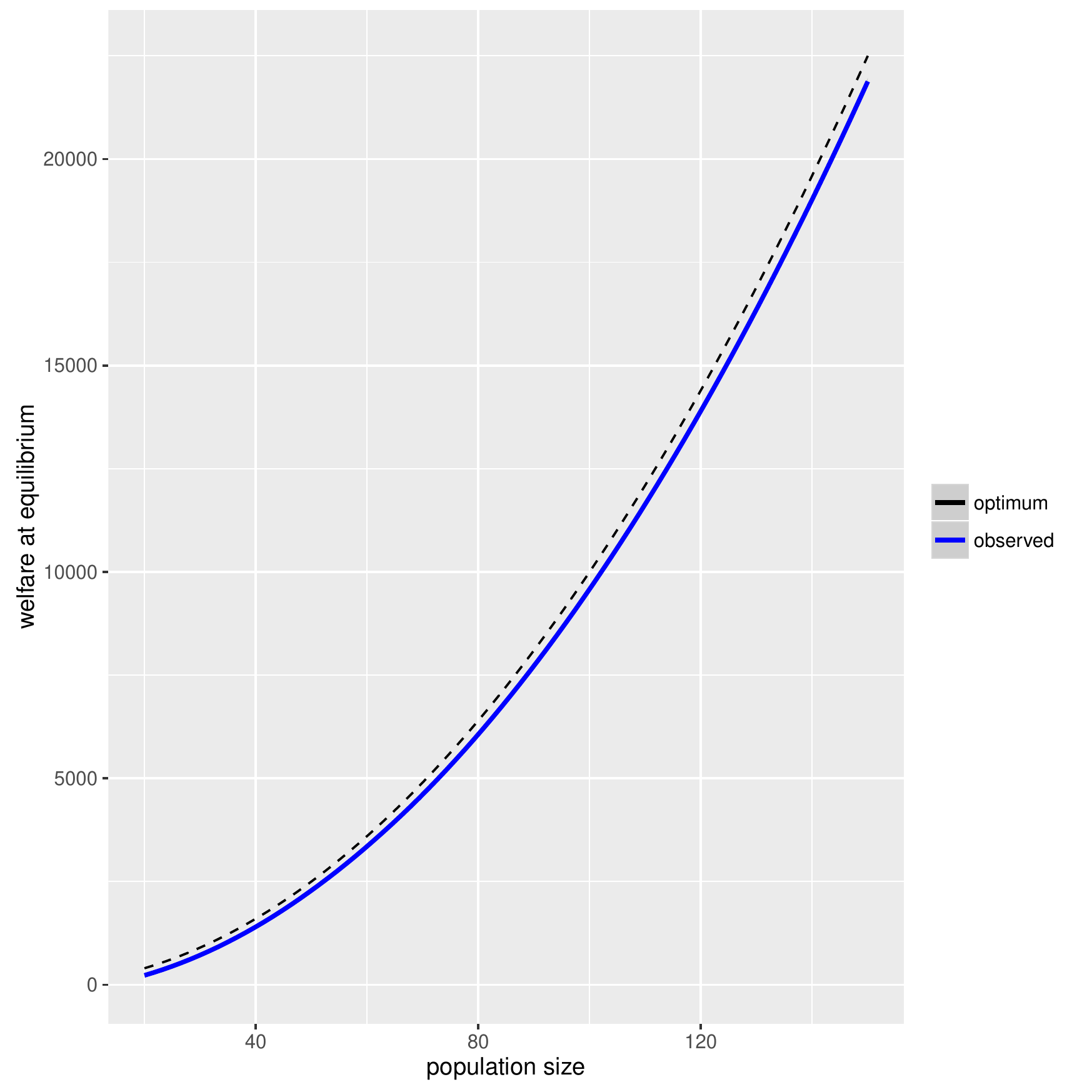}
  \hfill
  \includegraphics[width=5.0cm]{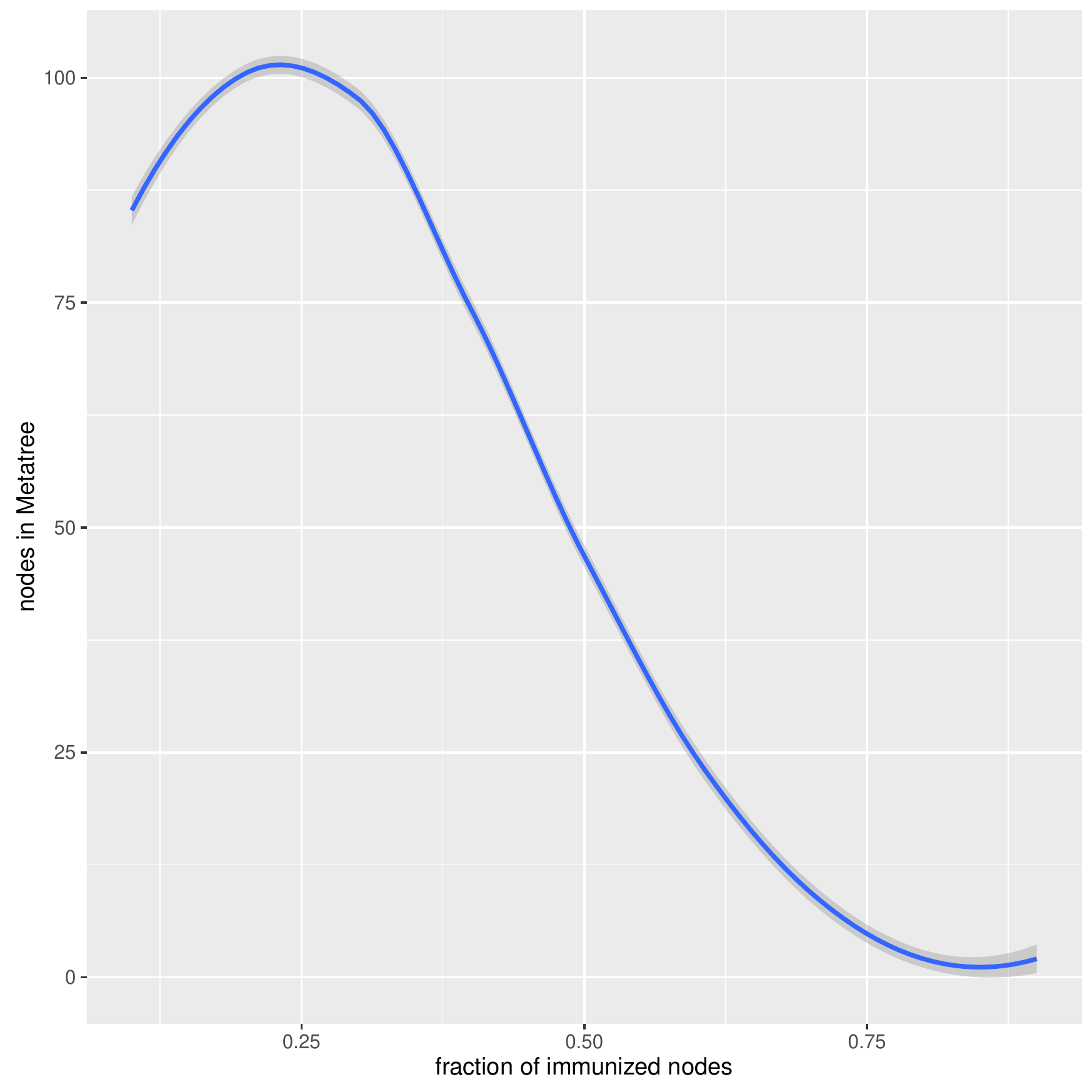}
  \caption{Number of iterations until convergence (top);  average welfare at non-trivial equilibria (middle) $\alpha = \beta = 2$.
    Bottom: \metatree size with respect to immunization density.}
  \label{fig:emp-plot0}
\end{figure}
Here a round consists of a best response strategy update by every player in some fixed order.
In contrast with the results in~\cite{GJKKM15arxiv,GJKKM15}, where much weaker strategy updates called swapstable best response have been used, our experiments indicate a speed-up of 50\% by updating to best possible strategies.

Fig.~\ref{fig:emp-plot0} (middle) plots the welfare of networks at
(non-trivial) equilibria over the population size achieved by performing best response dynamics with the initial setting as described above.
For every configuration a random sample from the 100 independent experiments was chosen.
We observe that the achieved global welfare is quite close to the optimal value of $n(n-\alpha)$ which indicates that best response dynamics yield favorable outcomes whenever they converge to a Nash equilibrium\footnote{Note that this property is not guaranteed, since Goyal et al.~\cite{GJKKM15} present a best response cycle, that is, a configuration where a sequence of best response strategy updates returns to the initial network and may thus never converge.}.


The run time for computing the partial best response strategy for attaching to connected components from $\CI$, as described in Section~\ref{subsec:mt-Select}, heavily depends on the number of \candidateblocks in the resulting \metatree for the component.   
To illustrate the benefits of using the \metatree as a preprocessing step, we compared the number of nodes of a random connected network and the number of \candidateblocks in the corresponding \metatree, varying the fraction of immunized players in the initial network. For each experiment we used connected $G_{n,m}$ random networks with $n=1000$ and $m=2n$ edges. 

The results are depicted in Fig.~\ref{fig:emp-plot0}~(right). There the number of \candidateblocks of the \metatree over
the fraction of immunized players averaged over $100$ runs per parameter combination is shown. As
expected, the number of \candidateblocks in the \metatree shrinks rapidly as the fraction
of immunized vertices in the initial network increases. We also note that the maximum number of \candidateblocks
in the \metatree is roughly $10\%$ of the number of nodes in the initial network.

We illustrate the behavior of best response dynamics with
Fig.~\ref{fig:emp-samplerun}, where snapshots from a sample run are shown. The initial network is sparsely connected
with $n/2 = 25$ edges and contains no immunized players.  During the
first round, a player with a sufficient number of incident edges bought by other players
decides to obtain immunization. Then all following players choose to
connect to this newly immunized hub. The subsequent rounds distribute
players away from the newly formed targeted regions until an
equilibrium is achieved after four rounds.

\begin{figure}[htb]
  \centering
  \includegraphics[width=2.5cm]{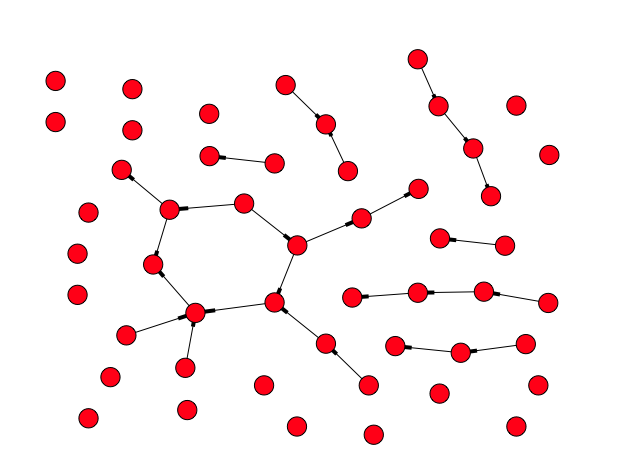}
  \hfill
  \includegraphics[width=2.7cm]{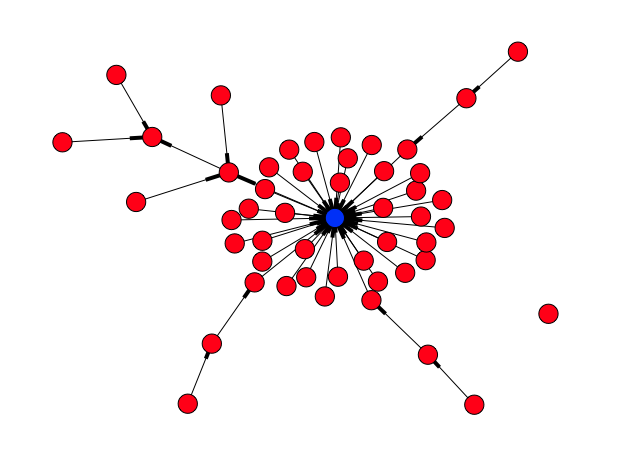}
  \hfill
  \includegraphics[width=2.7cm]{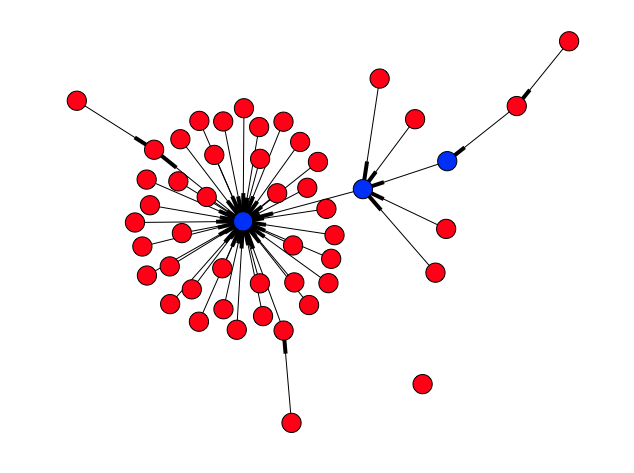}
  \hfill
  \includegraphics[width=2.7cm]{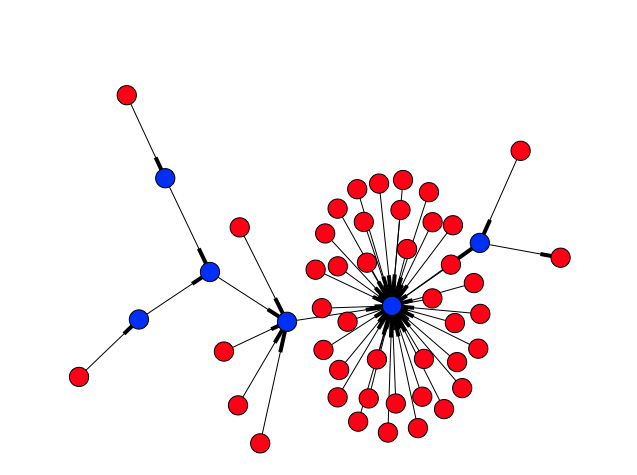}
  \hfill
  \includegraphics[width=2.7cm]{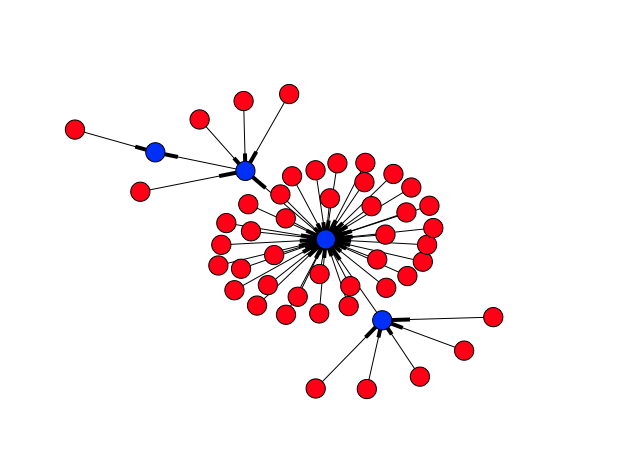}
  \caption{A sample run of the best response dynamic ($n=50, \alpha = \beta = 2$). \hspace{\textwidth} Initial state. \hfill After round 1. \hfill After round 2. \hfill After round 3. \hfill After round 4.}
  \label{fig:emp-samplerun}
\end{figure}


\section{Random Attack Adversary}
\label{sec:uniformAttacker}
In the following we will consider the random attack adversary~\cite{GJKKM15} which is a less predictable adversary compared to the maximum carnage adversary.

\paragraph{The Modified Model and its Implications}
The random attack adversary attacks one vulnerable node uniformly at random, i.e. with probability $\frac{1}{|\mathcal{U}|}$ per player in $\mathcal{U}$.
Facing this adversary, the probability for a vulnerable player $v_\mathcal{U}$ to be destroyed increases linearly with the size of player $v_\mathcal{U}$'s vulnerable region $\mathcal{R}_{\mathcal U}(v_\mathcal{U})$.
Moreover, every vulnerable player belongs to the set of targeted vertices, that is, $\mathcal{T} = \mathcal{U}$.

Clearly, the behavior of the chosen adversary influences the utility of a player's strategy heavily.
Thus, it is not at all clear that our algorithm for computing a utility maximizing strategy for the maximum carnage adversary can be adapted to the random attack adversary.
We now show that this can actually be done with only some minor adjustments.
Thereby we increase the run time slightly, but our algorithm is still efficient.

\paragraph{Adaptation of our Algorithm}
With the deterministic maximum carnage adversary, a player was either risk-free or potentially targeted with probability $\frac{1}{|\mathcal{T}|}$.
Introducing the random attack adversary results in up to $n$ different probabilities for the player $\ap$ to be destroyed due to different possible sizes of targeted regions.

Note that none of the subroutines \textsc{GreedySelect}, \textsc{MetaTreeConstruct} and \textsc{MetaTreeSelect} depend on $t_{max}$, but are only formulated in terms of $\mathcal{T}$ and $\mathcal{R}_{\mathcal T}$.
Even though these sets change in comparison to the maximum carnage adversary, correctness still holds for those parts of the algorithm.
In particular, the \metatree still fulfills the same properties as before, even though the number of \bridgeblocks increases for many input graphs, as all vulnerable regions are targeted regions now (see Fig.~\ref{fig:sample-metas-uniform}).

Thus, only the subroutine \textsc{SubsetSelect} has to be adjusted.

\begin{figure}[htb]
	\centering
	\includegraphics[width=4cm]{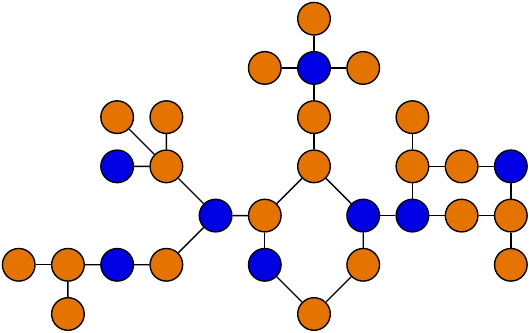}
	\hfill
	\includegraphics[width=4cm]{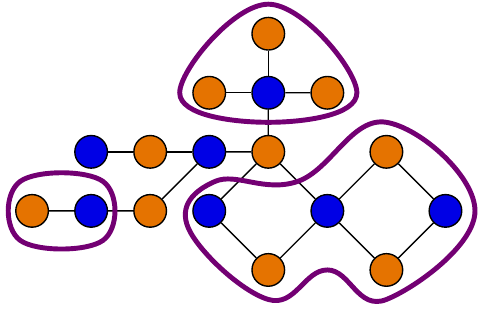}
	\hfill
	\includegraphics[width=3cm]{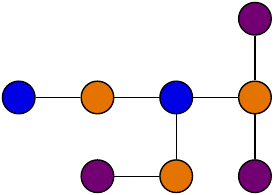}
	
	~
	
	\hfill
	\includegraphics{vulnerable.pdf} vulnerable
	\hfill
	\includegraphics{targeted.pdf} targeted
	\hfill
	\includegraphics{mixed.pdf} mixed
	\hfill
	\includegraphics{immunized.pdf} immunized
	\hfill
	\caption{A graph component (left) and its \metagraph (middle) and \metatree (right) for the random attack adversary.}
	\label{fig:sample-metas-uniform}
\end{figure}

\paragraph{\textsc{SubsetSelect} for the Random Attack Adversary}
With the random attack adversary the maximum expected profit generated by all possible sizes of the vulnerable region of $\ap$ need to be considered.
A larger region might lower the overall profit as it increases the attack probability on the vulnerable region of $\ap$.
Instead of two sets of selected components $\CSet_v, \CSet_t$, we now need to consider up to $n$ sets that maximize the profit for each possible size of $\mathcal{R}_\mathcal{U}$.

As before, the interdependent subset selection on $\CU \setminus \Cinc$ with $|\CU \setminus \Cinc| = m$ fills the matrix $M$.
In the end, the plane $M[m,\cdot,\cdot]$ contains all possible numbers of players from these components that $\ap$ can buy edges to.
These values constitute all possible sizes of $\mathcal{R}_\mathcal{U}(v)$.
There might be different subsets of $\CU \setminus \Cinc$ that result in the same size of the active player's vulnerable region.
Observe that the maximum utility is always achieved with the subset that uses the least amount of edges.

So for every row $M[m,\cdot,z]$ the algorithm selects the minimum $y$ so that $M[m,y,z] = z$.
This can be done in $\mathcal{O}(n^2)$ and yields a maximum of $n$ values.
We call this modified version \textsc{UniformSubsetSelect}. 

\paragraph{\textsc{BestResponseComputation} for the Random Attack Adversary}
The modified main algorithm can be found in Algorithm~\ref{alg:bestResponseUniform}. 

The only change compared to the maximum carnage adversary is that for each of the $\mathcal{O}(n)$ solutions of \textsc{UniformSubsetSelect} the subroutine \textsc{PossibleStrategy} is executed.
This yields an overall run time of $\mathcal{O}(n^3 + n(n^4+k^5)) = \mathcal{O}(n^5 + nk^5)$, which is in $\mathcal{O}(n^6)$.

Observe that we calculate best responses for all possible sizes of $\mathcal{R_U}(\ap)$, and $\mathcal{T}$ remains unchanged across all cases.
\begin{algorithm}
    \KwIn{Strategies $\mathbf{s} = (s_1, \ldots, s_n)$, active player $\ap, 1 \leq \aps \leq n$}
    \KwOut{Best response strategy of player $\ap$ denoted by $s_\aps = (x_\aps, y_\aps)$}

    $s_\emptyset = (\emptyset, 0)$\;
    
    Let $G(\mathbf{s'})$ be the induced game state with $\mathbf{s'} = (s_1, \ldots, s_{\aps-1}, s_\emptyset, s_{\aps+1}, \ldots, s_n)$\;

    Let $\mathfrak{A}$ be the solutions of \textsc{UniformSubsetSelect} on $\CU$\;

    Let $\CSet_g$ be the solution of \textsc{GreedySelect} on $\CU$\;

    $S_v$ = $\left\{\textnormal{\textsc{PossibleStrategy}}(\CSet, 0) \mid \CSet \in \mathfrak{A} \right\}$\;

    $s_g$ = \textsc{PossibleStrategy}($\CSet_g$, 1)\;

    $S = \{s_{\emptyset}, s_g\} \cup S_v$\;
    \Return{\textnormal{strategy $s \in S$ which maximizes $\ap$'s utility}}\;

    \caption{\textsc{BestResponseComputation} for the random attack adversary.}
    \label{alg:bestResponseUniform}
\end{algorithm}

\section{Conclusion}
\label{sec:conclusion}

For most models of strategic network formation computing a utility maximizing strategy is known to be \NP-hard.
In this paper, we have proven that the model by Goyal et al.~\cite{GJKKM15,GJKKM15arxiv} is a notable exception to this rule.
The presented efficient algorithm for computing a best response for a player circumvents a combinatorial explosion essentially by simplifying the given network and thereby making it amenable to a dynamic programming approach.
An efficient best response computation is the key ingredient for using the model in large scale simulations and for analyzing real world networks. Moreover, our algorithm can be adapted to a significantly stronger adversary and we are confident that further modifications for coping with other variants of the model are possible.

\paragraph{Future Work}

Settling the complexity of computing a best response strategy with respect to the maximum disruption adversary is left as an open problem. 
Besides this, it seems worthwhile to consider a variant with directed edges, originally introduced by Bala \& Goyal~\cite{BG00}.
Directed edges would more accurately model the differences in risk and benefit which depend on the flow direction.
Using the analogy of the WWW, a user who downloads information benefits from it, but also risks getting infected.
In contrast, the user providing the information is exposed to little or no risk.
Moreover, a constant cost for immunization seems unrealistic.
In reality a highly connected node would have to invest much more into security measures than any node with only a few connections.
Thus, it would be interesting to consider a version where immunization costs scale with the degree of a node.
We believe that this variant of the model yields more diverse optimal networks and also a greater variety of equilibria.

	\bibliographystyle{abbrv}
	\bibliography{mpss2016farXiv}
	
\end{document}